\let\LaTeXcline\cline
\let\cline\LaTeXcline
\newtheorem{theorem}{Theorem}[section]
\newtheorem{corollary}[theorem]{Corollary}
\newtheorem{definition}[theorem]{Definition}
\newcommand{\Swiper}{Swiper}
\newcommand{\WRFull}{Weight Restriction}
\newcommand{\WQFull}{Weight Qualification}
\newcommand{\WSFull}{Weight Separation}
\newcommand{\WRShort}{\WR}
\newcommand{\WQShort}{\WQ}
\newcommand{\WSShort}{\WS}
\newcommand{\fQw}{\beta_w}
\newcommand{\fQn}{\beta_n}
\newcommand{\fRw}{\alpha_w}
\newcommand{\fRn}{\alpha_n}
\newcommand{\fAw}{f_w}
\newcommand{\fAn}{f_n}
\newcommand{\fw}{\fAw}
\newcommand{\protocol}{\mathcal{P}}
\newcommand{\problem}{\Pi}
\newcommand{\myparagraph}[1]{\subparagraph*{\textbf{#1.}}}
\definecolor{sky}{RGB}{50,150,255}
\definecolor{crimson}{RGB}{200,20,60}
\algnewcommand{\InputInline}[1]{\textbf{Input:} #1}
\algnewcommand{\OutputInline}[1]{\textbf{Output:} #1}
\newtcolorbox{myframe}[2][]{%
  enhanced,colback=white,colframe=black,coltitle=black,
  sharp corners, left=2pt, right=2pt, boxrule=0.5pt,
  fonttitle=\scshape\small,
  attach boxed title to top left={yshift=-0.3\baselineskip-0.4pt,xshift=3mm},
  boxed title style={tile,size=minimal,left=0.7mm,right=0.7mm,
    colback=white,before upper=\strut},
  title=#2,#1
}
\newenvironment{notation}%
{    \begin{myframe}{Notation}%
    \small%
}{    \end{myframe}%
}
\newcounter{statementcnt}
\newenvironment{statement}[1][]%
{    \stepcounter{statementcnt}%
    \begin{myframe}{Problem {\arabic{statementcnt}}\ifthenelse{\equal{#1}{}}{}{ (#1)}}%
    \small%
}{    \end{myframe}%
}
\newlength{\dhatheight}
\newcommand{\doublehat}[1]{%
    \settoheight{\dhatheight}{\ensuremath{\hat{#1}}}%
    \addtolength{\dhatheight}{-0.35ex}%
    \hat{\vphantom{\rule{1pt}{\dhatheight}}%
    \smash{\hat{#1}}}}
\newcommand{\QED}{\hfill \ensuremath{\Box}}
\setlist[enumerate]{itemindent=\dimexpr\labelwidth+\labelsep\relax,leftmargin=0pt}
\setlist[itemize]{itemindent=\dimexpr\labelwidth+\labelsep\relax,leftmargin=0pt}
\begin{document}

\title{Swiper: a new paradigm for efficient weighted distributed protocols}

\author[1]{\fnm{Andrei} \sur{Tonkikh}}\email{tonkikh@telecom-paris.fr}

\author[1]{\fnm{Luciano} \sur{Freitas}}\email{lfreitas@telecom-paris.fr}

\affil[1]{\orgdiv{LTCI}, \orgname{T\'el\'ecom Paris, Institut Polytechnique de Paris}, \orgaddress{\street{19 Place Marguerite Perey}, \city{Palaiseau}, \postcode{91120}, \state{Essonne}, \country{France}}}

\abstract{
The majority of fault-tolerant distributed algorithms are designed assuming a \emph{nominal} corruption model, 
in which at most a fraction $f_n$ of parties can be corrupted by the adversary.
However, due to the infamous Sybil attack, nominal models are not sufficient to express the trust assumptions in open (i.e., permissionless) settings.
Instead, permissionless systems typically operate in a \emph{weighted} model, 
where each participant is associated with a \emph{weight} and the adversary can corrupt a set of parties 
holding at most a fraction $f_w$ of the total weight.

In this paper, we suggest a simple way to transform
a large class of protocols designed for the nominal model into the weighted model. 
To this end, we formalize and solve three novel optimization problems, which we collectively call \emph{the weight reduction problems}, that allow us to map large real weights into small integer weights while preserving the properties necessary for the correctness of the protocols.
In all cases, we manage to keep the sum of the integer weights to be at most linear in the number of parties,
resulting in extremely efficient protocols for the weighted model.
Moreover, we demonstrate that, on weight distributions that emerge in practice, the sum of the integer weights tends to be far from the theoretical worst case and, sometimes, even smaller than the number of participants.

While, for some protocols, our transformation requires an arbitrarily small reduction in resilience (i.e., $f_w = f_n - \epsilon$), surprisingly, for many important problems, we manage to obtain weighted solutions with the same resilience ($f_w = f_n$) as nominal ones.
Notable examples include erasure-coded distributed storage and broadcast protocols, verifiable secret sharing, and asynchronous consensus.
Although there are ad-hoc weighted solutions to some of these problems, the protocols yielded by our transformations enjoy all the benefits of nominal solutions, including simplicity, efficiency, and a wider range of possible cryptographic assumptions.
}

\keywords{weight reduction, distributed protocols, weighted cryptography, threshold cryptography, consensus, random oracles, broadcast}

\maketitle

\section{Introduction}

\subsection{Weighted distributed problems}

Traditionally, distributed problems are studied in the egalitarian setting where $n$ parties communicate over a network and any $t$ of them can be faulty or corrupted by a malicious adversary.
Different combinations of $n$ and $t$ are possible depending on the problem at hand, the types of failures (crash, omission, semi-honest, or malicious, also known as Byzantine), and the network model (typically, asynchronous, semi-synchronous, or synchronous).
However, for most distributed protocols, $t$ has to be smaller than a certain fraction of $n$. For example, most practical Byzantine fault-tolerant consensus protocols~\cite{pbft,canetti-rabin} can operate for any $t < \frac{n}{3}$.
We call such models \emph{nominal} and use $\fAn$ to denote their \emph{resilience}, i.e., a nominal protocol with resilience $\fAn$ operates correctly as long as less than $\fAn n$ parties are corrupt, where $n$ is the total number of participants.

However, this simple corruption model is not always sufficient to express the actual fault structure or trust assumptions of real systems.
As a result, we see many practical blockchain protocols adopt a more general, \emph{weighted} model, where each party is associated with a real \emph{weight} that, intuitively, represents the number of ``votes'' this party has in the system.
The assumption on the \emph{number} of corrupt parties in this setting is replaced by the assumption that the \emph{total weight} of the corrupt parties is smaller than a fraction $\fAw$ of the total weight of all participants. 
For example, in permissionless systems, the weight can correspond to the amount of ``stake'' or computational resources a participant has invested in the system and, in the context of managed systems, to a function of the estimated failure probability.

There are two main reasons for adopting the weighted model in the context of blockchain systems.
First and foremost, it protects the system from the infamous \emph{Sybil attacks}, i.e., malicious users registering themselves multiple times in order to obtain multiple identities, thereby surpassing the resilience threshold $\fAn$.
Second, it is speculated that users with a greater amount of resources (monetary, computational, or otherwise) invested in the system, and consequently a higher weight, will be more committed to the system's stability and less likely to engage in malicious behavior.

\subsection{Weighted voting and where it needs help} \label{subsec:weighted-voting}

Perhaps, the most prevalent tool used for the design of distributed protocols is \emph{quorum systems}~\cite{weighted-voting,quorum-systems,byz-quorum-systems}.
Intuitively, to achieve fault tolerance, each ``action'' is confirmed by a sufficiently large set of participants (called a \emph{quorum}).
Then, if two actions are conflicting or somehow interdependent (e.g., writing and reading a file in a distributed storage system), then the parties in the intersection of the quorums are supposed to ensure consistency.
Thus, many distributed protocols can be converted from the nominal to the weighted setting simply by changing the quorum system, i.e., instead of waiting for confirmations from a certain number of parties, waiting for a set of parties with the corresponding fraction of the total weight.
We call this strategy \emph{weighted voting} and it often allows translating protocols from the nominal to the weighted model while maintaining the same resilience (i.e., $\fAw = \fAn$) and, in some cases, with virtually no overhead.

However, weighted voting has two major downsides.
First and foremost, many protocols rely on primitives beyond simple quorum systems, and weighted voting is often insufficient to translate these protocols to the weighted model.
Notable examples include threshold cryptography~\cite{threshold-cryptosystems,ThresholdBLS}, secret sharing~\cite{Sha79,avss-cachin-2002}, erasure and error-correcting codes~\cite{coding-theory-textbook}, and numerous protocols that rely on these primitives. %

Another example, relevant to blockchain systems,\atadd{ where weighted voting is typically not sufficient} is\atadd{ in} Single Secret Leader Election protocols~\cite{boneh2020single,adaptive-ssle,uc-ssle, freitas2023homomorphic}.
It illustrates that not all protocols that \atrev{cannot} be\atremove{ easily} converted to the weighted model\atadd{ simply3} by applying weighted voting belong to the categories above and motivates the general approach taken in this paper.

The second drawback of weighted voting is that it requires a careful examination of the protocol in order to determine whether weighted voting is sufficient to convert it to the weighted model, as well as non-trivial modifications to the protocol implementation. 
It would be much nicer to have a ``black-box'' transformation that would take a protocol designed and implemented for the nominal model and output a protocol for the weighted model.
\atrev{In this paper, we offer both a ``black-box'' transformation and a set of more efficient ``open-box'' transformations for a wide range of problems.}

\subsection{Our contribution} \label{subsec:contribution}

Our contribution to the fields of distributed computing and applied cryptography is twofold:
\begin{enumerate}
    \item We present a simple and efficient black-box transformation that can be applied to convert a wide range of protocols designed for the nominal model into the weighted model.
    Crucially, one can determine the applicability of the transformation simply by examining the \emph{problem} in question (e.g., Byzantine consensus), instead of the \emph{protocol} itself (e.g., PBFT~\cite{pbft}) and it does not require modifications to the source code, only a slim wrapper around it.
    The price for this transformation is an arbitrarily small decrease in resilience ($\fAw = \fAn - \epsilon$, where $\epsilon > 0$) and an increase in the communication and computation complexities proportional to $\frac{\fAw}{\epsilon}$.

    \item Furthermore, by opening the black box and examining the internal structure of distributed protocols, we discover that by combining our transformation with weighted voting, in many cases, we can obtain weighted algorithms \emph{without} the reduction in resilience ($\fAw = \fAn$) and with a minor or non-existent performance penalty.
\end{enumerate}

\begin{table*}[!htb]
    \newlength{\citationcolumnwidth}
    \setlength{\citationcolumnwidth}{\widthof{[00, 00, 00]}}
    
    \newcommand{\citbox}[1]{\parbox{\citationcolumnwidth}{\centering #1}}

        \newcommand{\oh}[1]{\ensuremath{\times\,#1}}
            \newcommand{\ohapprox}[1]{\oh{#1}}
        \newcommand{\nooh}{--}

        \newcommand{\WRBlackBox}{{\WRShort} (BB)}

    \centering
    \renewcommand{\arraystretch}{1.1}
    \resizebox{\textwidth}{!}{%
        \begin{tabular}{ccccccc}
        \hline
        \thead{distributed\\problem}
            & \thead{nominal\\solutions}
            & \thead{weight reduction\\problem}
            & \thead{$f_w$} 
            & \thead{$f_n$}
            & \thead{worst-case average\\comm. overhead}
            & \thead{worst-case average\\comp. overhead}
                \\
        \hline\hline
                \multicolumn{7}{c}{Derived Protocols}\\
                \hline\hline
        \makecell{Efficient Asynchronous\\State-Machine Replication} 
            & \citbox{\cite{honey-badger,beat-bft,all-you-need-is-dag,narwhal,bullshark}}
            & \makecell{{\WRShort} for RNG\\{\WQShort} for Broadcast}
            & $1/3$
            & $1/3$
            & \makecell{\ohapprox{1.33} for Broadcast \\ \oh{1.33} for RNG}
            & \makecell{\ohapprox{3.56} for Broadcast \\ \oh{1.33} for RNG}
                                    \\
        \hline
        \makecell{Structured Mempool}
            & \citbox{\cite{narwhal}}
            & {\WQShort} for Broadcast
            & $1/3$
            & $1/3$
            & \makecell{\ohapprox{1.33} for Broadcast}
            & \makecell{\ohapprox{3.56} for Broadcast}
                                    \\
        \hline
                                                                                                        \makecell{Validated Asynchronous\\Byzantine Agreement} 
            & \citbox{\cite{vaba-cachin,vaba-abraham}}
            & {\WRShort} for RNG
            & $1/3$
            & $1/3$
            & \oh{1.33} for RNG             & \oh{1.33} for RNG                                     \\

                                                                                                                                                                                                                                                                                        \hline
        \makecell{Consensus with Checkpoints}
            & \cite{pikachu}
            & {\WRShort} for signing
            & $1/3$ 
            & $1/3$ 
            & \oh{1.33} for signing
            & \oh{1.33} for signing
            \\
        \hline
        Linear BFT Consensus
            & \cite{yin2019hotstuff}
            & \multirowcell{2}{\WRBlackBox}
            & \multirowcell{2}{$1/4$}
            & \multirowcell{2}{$1/3$}
            & \multirowcell{2}{\oh{2.67}}
            & \multirowcell{2}{\oh{2.67}}
            \\ 
        Chain-Quality SSLE
            & \cite{boneh2020single}             \\
        \hline\hline

                \multicolumn{7}{c}{Useful Building Blocks} \\
        \hline\hline
                                                                                                                                                        \multirowcell{2}{Erasure-Coded\\Storage and Broadcast}
                        
                        & \multirowcell{2}{\citbox{\cite{ida-rabin-89,ida-cachin-tessaro-05,ida-hendricks-07,ida-nazirkhanova-21,ida-yang-22,rbc-erasures-high-threshold-trusted-setup}}}
            & {\WQShort}   & $1/3$ & $1/3$ & \ohapprox{1.33} & \ohapprox{3.56}
                        \\ \cline{3-7}
                    &             & {\WRBlackBox} & $1/4$ & $1/3$ & \nooh & \oh{3}
                        \\
        \hline
        \multirowcell{2}{Error-Corrected Broadcast}
            & \multirowcell{2}{\cite{dxr21}}
            & {\WQShort}   
            & $1/3$ 
            & $1/3$ 
            & \ohapprox{1.33}
            & \oh{7.11}
            \\ \cline{3-7}
                    &             & {\WRBlackBox}
            & $1/4$
            & $1/3$
            & \nooh
            & \oh{3}
            \\
        \hline
        Verifiable Secret Sharing & \cite{Sha79}
                        & \WRShort
            & $1/3$
            & $1/3$
            & \oh{1.33}
            & \oh{1.33}
                        \\
        \hline
        Common Coin & \cite{pre-shared-common-coins,random-oracles}
            & \multirowcell{4}{\WRShort}
            & \multirowcell{4}{$1/3$} 
            & \multirowcell{4}{$1/2$} 
            & \multirowcell{4}{\oh{1.33}}
            & \multirowcell{4}{\oh{1.33}}
            \\
        Blunt Threshold Signatures & \cite{ThresholdBLS,ThresholdSchnorr,rsa-threshold-signatures} \\
        Blunt Threshold Encryption & \cite{threshold-cryptosystems} \\
        Blunt Threshold FHE & \cite{ThFHE-jain,ThFHE-boneh} \\
        \hline
        Tight Secret Sharing
            & \multirowcell{4}{See sec. \ref{subsec:tight-secret-sharing}\\(this paper)}
            & \multirowcell{4}{\WRShort}
            & \multirowcell{4}{$1/2$} 
            & \multirowcell{4}{$1/2$} 
            & \multirowcell{4}{\oh{1.33} \\ (+$O(n^2)$ small messages)}
            & \multirowcell{4}{\oh{1.33}   }
                        \\
        Tight Threshold Signatures  \\
        Tight Threshold Encryption  \\
        Tight Threshold FHE  \\
        \hline
    \end{tabular}
    }
    \medskip
    \caption{Examples of suggested weighted distributed protocols with the upper bounds on communication and computation overhead compared to the nominal solutions with the same number of participants.
    See \Cref{sec:wr-applications,sec:wq-applications,sec:derived-applications} for details on how these numbers were obtained.
    In \Cref{sec:empirical-study}, we study real-world weight distributions and conclude that, in practice, the overhead should be much smaller.
    ``{\WRShort}'' and ``{\WQShort}'' refer to the weight reduction problems defined in \Cref{sec:problem-statement}. ``{\WRBlackBox}'' refers to the black-box transformation described in \Cref{subsec:black-box}.}
    \label{tab:applications}
\end{table*}

We summarize some examples of our techniques applied to a range of different protocols in \Cref{tab:applications}. The last two columns of the table give the upper bound on the overhead of the obtained weighted protocols compared to their nominal counterparts executed with the same number of parties.
Note, however, that, in many cases, the overhead applies only to specific parts of the protocol, which may not be the bottlenecks.
Thus, further experimental studies may reveal that the real overhead is even lower or nonexistent, even with the worst-case weight distribution.
Columns ``$\fAw$'' and ``$\fAn$'' specify the resilience of the weighted protocols obtained and the original nominal protocols, respectively. As discussed above, in most cases\atadd{,} we manage to avoid sacrificing resilience (i.e., $\fAw = \fAn$).

Furthermore, the main building block of our constructions, the \emph{weight reduction problems}, may be of separate interest and may have important applications beyond distributed protocols.
It is indeed an interesting and somewhat counterintuitive observation that large real weights can be efficiently reduced to small integer weights while preserving the key structural properties. 
We formally define the three weight reduction problems considered in this paper in \Cref{sec:problem-statement} and present a practical solver called {\Swiper} in \Cref{sec:algo}.

\subsection{Empirical study} \label{subsec:intro:empirical-findings}

The performance of the weighted protocols constructed as suggested in this paper is sensitive to the distribution of\atadd{ the participants'} weights\atremove{ of the participants}.
While we provide upper bounds and thus analyze our protocols for the worst weight distributions possible, it is interesting whether such weight distributions emerge in practice.

To study real-world weight distributions, we tested our weight reduction algorithms on the distribution of funds from multiple existing blockchain systems~\cite{aptos_white,tezos_white,filecoin_white,algorand_white} ranging in size from a hundred parties~\cite{aptos_stake,aptos_white} up to multiple tens of thousands~\cite{algorand_stake,algorand_white}.
We perform an in-depth analysis in \Cref{sec:empirical-study}.

\subsection*{Roadmap}

The rest of the paper is organized as follows: we formally define weight reduction problems and state the upper bounds in \Cref{sec:problem-statement}.
We present {\Swiper} in \Cref{sec:algo}.
The proof that it satisfies the stated bounds is delegated to \Cref{sec:proof}.
\Cref{sec:wr-applications,sec:wq-applications,sec:derived-applications} discuss in detail the applications of the weight reduction problems in distributed computing and cryptography.
In \Cref{sec:empirical-study}, we discuss the results of the empirical study performed on real-world weight distributions.
We discuss related work in \Cref{sec:related} and conclude the paper with the discussion of directions for future work in \Cref{sec:conclusion}.

To avoid distraction, we moved to the appendix the parts of the paper that are, while essential, not required for understanding the key ideas.
Specifically, the formal proofs of the upper bounds (\Cref{sec:proof}), the mixed integer programming formulation of the {\WRFull} problem (\Cref{sec:mip}), and the empirical evaluation results (\Cref{sec:remainingplots}).

\section{Weight reduction problems} \label{sec:problem-statement}

Let us define the key building block to our construction, the \emph{weight reduction problems}: a class of optimization problems that map (potentially large) real weights $w_1, \dots, w_n \in \mathbb{R}_{\ge 0}$ to (ideally small) integer weights $t_1, \dots, t_n \in \mathbb{Z}_{\ge 0}$ while preserving certain key properties.
For convenience, we use the word \emph{``tickets''} to denote the units of the assigned integer weights, i.e., if $t_1, \dots, t_n$ is the output of a weight reduction problem, we say that party $i$ is given $t_i$ \emph{tickets}.

\begin{notation}
    To avoid repetition, throughout the rest of the paper, we use the following notation:
    \begin{enumerate}
        \item $[n] := \{1, 2, \dots, n\}$
        \item for any $S \subseteq [n]$: $w(S) := \sum_{i \in S} w_i$
        \item for any $S \subseteq [n]$: $t(S) := \sum_{i \in S} t_i$
                \item $W := w([n]) = \sum_{i=1}^{n} w_i$
        \item $T := t([n]) = \sum_{i=1}^{n} t_i$
    \end{enumerate}
\end{notation}

\subsection{\WRFull}

The first weight reduction problem is \emph{\WRFull} (or simply {\WR}).
It is parameterized by two numbers $\fRw, \fRn \in (0, 1)$ and requires the mapping to preserve the property that any subset of parties of weight less than $\fRw$ obtains less than $\fRn$ tickets.
More formally:
\begin{statement}[\WRFull]
    \begin{atreview}      Given $\fRw, \fRn \in (0, 1)$ and $w_1, \dots, w_n \in \mathbb{R}_{\ge 0}$\atadd{ such that $W \neq 0$} as input, find $t_1, \dots, t_n \in \mathbb{Z}_{\ge 0}$ such that \atreplace{$\sum_{i=1}^n t_i$}{$T$} is minimized, subject to the following restriction:
    
    \begin{center}
        $\forall S \subseteq [n]\text{ s.t. }w(S) < \fRw W: t(s) < \fRn T$
    \end{center}
    \end{atreview}
    \label{def:wrp}
\end{statement}

In \Cref{sec:wr-applications}, we apply {\WRFull} to implement the black-box transformation announced in \Cref{subsec:contribution} as well as weighted versions of secret sharing and threshold cryptography with different access structures.
In~\Cref{sec:proof}, we prove the following theorem:
\begin{theorem}[{\WR} upper bound] \label{thm:wr-bound}
    For any $\fRw, \fRn \in (0, 1)$ such that $\fRw < \fRn$\atadd{ and any $w_1, \dots, w_n$}: there exists a solution to the {\WRFull} problem with
        $T \le \left\lceil \frac{\fRw(1 - \fRw)}{\fRn - \fRw} n \right\rceil$
\end{theorem}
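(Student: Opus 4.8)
The plan is to prove the bound constructively: scale the real weights by a single factor, round the scaled values to integers, and then choose the scale (and the rounding) so that the restriction constraint and the target bound on $T$ hold simultaneously. I would first normalize, working with $p_i := w_i/W$ (well defined since $W \neq 0$), so that $\sum_i p_i = 1$, and search for an assignment of the shape $t_i \approx \lambda p_i$ with $T \approx \lambda$ for a scale $\lambda$ to be fixed later.

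It is worth recording the ``easy direction'' first, since it pins down what $\lambda$ should be. If every party is rounded \emph{down}, $t_i = \lfloor \lambda p_i \rfloor$, then $t(S) \le \lambda\, w(S)/W$ for every $S$, so any set with $w(S) < \fRw W$ automatically has $t(S) < \fRw \lambda$ --- exactly the form the restriction asks for, provided $\fRw \lambda \le \fRn T$. As flooring loses less than one ticket per party, $T > \lambda - n$, and $\fRw \lambda \le \fRn(\lambda - n)$ holds once $\lambda \ge \frac{\fRn}{\fRn - \fRw} n$. This already gives a feasible solution, but only with the weaker bound $T \le \frac{\fRn}{\fRn - \fRw} n$, because the crude estimate $T > \lambda - n$ charges a full unit of rounding loss to each of the $n$ parties.

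The heart of the proof is to recover the factor separating $\fRn$ from $\fRw(1-\fRw)$, i.e. to avoid paying the full $n$ units of loss while still bounding the adversary. The approach I would take decouples the two roles of the scale: impose the cap $t_i \le \frac{\fRn T}{\fRw} p_i$, which controls \emph{every} low-weight subset through the fractional (knapsack) estimate $t(S) \le \frac{\fRn T}{\fRw}\,\frac{w(S)}{W} < \fRn T$, and then choose the integer tickets to sum to a \emph{target} value $T$ as small as possible, trimming any surplus (which only helps the constraint). Reaching the target reduces to the arithmetic statement $\sum_i \lfloor \tfrac{\fRn T}{\fRw} p_i \rfloor \ge T$, equivalently to bounding the total rounding loss $L := \sum_i \{ \tfrac{\fRn T}{\fRw} p_i \}$. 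Since $\sum_i \tfrac{\fRn T}{\fRw} p_i = \tfrac{\fRn}{\fRw}T$, the feasibility inequality becomes $L \le \tfrac{\fRn - \fRw}{\fRw} T$; a loss estimate of $L \le (1-\fRw)n$ makes this hold as soon as $T$ reaches $\frac{\fRw(1-\fRw)}{\fRn - \fRw} n$, so the target $T = \big\lceil \frac{\fRw(1-\fRw)}{\fRn - \fRw} n \big\rceil$ becomes feasible and the theorem follows.

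The main obstacle is precisely this loss estimate. At a \emph{fixed} scale the fractional parts $\{ \tfrac{\fRn T}{\fRw} p_i \}$ can all be arbitrarily close to $1$ (adversarial weights sitting just below multiples of the rounding unit), so the loss cannot be controlled pointwise below $n$. I expect to resolve this using the freedom still left in the construction: either by averaging over the scaling factor to exhibit \emph{some} scale at which the total loss is small enough, or by a two-sided rounding that rounds heavy parties up --- recovering their loss without helping a low-weight adversary, since such a set can contain only few heavy parties --- while keeping light parties rounded down. Extracting the exact constant, rather than a weaker multiple, is the delicate point; this is where the symmetric, variance-like quantity $\fRw(1-\fRw)$ should emerge, from balancing the tickets a set of weight below $\fRw W$ can capture against those forced onto its complement, whose weight exceeds $(1-\fRw)W$.
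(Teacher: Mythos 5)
Your first two steps are sound: the floor-with-cap argument does give a valid solution with $T \le \frac{\fRn}{\fRn-\fRw}n$, and your reduction of the stated bound to the loss estimate $\sum_i \{\tfrac{\fRn T}{\fRw}p_i\} \le (1-\fRw)n$ is algebraically correct. But the proposal stops exactly where the theorem starts: that loss estimate is false at a fixed scale (as you acknowledge), and neither of your proposed repairs can close the gap within your framework. Averaging over the scale is not available: the cap $t_i \le \frac{\fRn T}{\fRw}p_i$ is the only thing controlling adversarial sets, and it pins the scale to $\frac{\fRn T}{\fRw}$ --- a smaller scale only hurts feasibility, while any rounding up past the cap destroys the fractional-knapsack bound $t(S) < \fRn T$, which must hold for \emph{every} low-weight set $S$, including those containing the rounded-up parties. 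Rounding ``heavy parties up, light parties down'' fails for the same reason, and in the bad instances the loss does not even sit at the heavy parties. Concretely, take $\fRw = 1/3$, $\fRn = 1/2$ (so the claimed bound is $T^* = \lceil 4n/3\rceil$), two heavy parties of weight $\approx (1+\epsilon)W/4$ each, and $n-2$ light parties of weight $\approx \frac{(1-\epsilon)W}{2n}$ each. At any target $T \le T^*$ the cap gives each light party at most $1-\epsilon$, hence zero tickets, and the two heavy parties can carry only about $\frac{3T}{4}$ tickets between them, so feasibility $\sum_i\lfloor \frac{\fRn T}{\fRw}p_i\rfloor \ge T$ fails for every $T \le T^*$ once $n$ is large: the cap-certificate framework itself, not just your loss estimate, provably cannot reach the stated constant. (Any viable assignment meeting the bound on this instance must treat equal-weight parties \emph{unequally} --- a ticket to some light parties, none to others --- which no proportional cap permits; here the adversary's worst set consists entirely of light parties, so ``few heavy parties in the set'' gives no leverage.)

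The paper's proof resolves this tension by abandoning the cap certificate. It rounds with an offset, $t_i = \lfloor s w_i + \fRw\rfloor$ --- precisely the rounding that makes your loss estimate true, since each party loses at most $1-\fRw$ --- and pays for it with a different viability argument: adversarial sets now satisfy only $t(S) < \fRw W s + \fRw|S|$, an $|S|$-dependent bound. The idea you are missing is to bound \emph{non-viable} assignments from above instead of certifying viable ones: the paper takes a locally minimal viable scale $s^*$, removes one ticket at a time from the parties sitting exactly on the rounding boundary until viability first breaks, and shows that any non-viable assignment in this family --- using the violating set $S$, the complementary lower bound $t_i \ge s^* w_i + \fRw - 1$, and a cancellation of the $|S|$ terms --- has total strictly below $\frac{\fRw(1-\fRw)}{\fRn-\fRw}n$. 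The adjacent viable assignment, having exactly one more ticket, then meets the ceiling. If you want to salvage your write-up, that is the pivot: keep the offset rounding (your loss estimate), but replace ``certify my assignment is good'' with ``show every bad assignment in the family is small, and take the viable assignment one ticket above a bad one.''
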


To make sense of this expression, note that: (1) it is proportional to $n$; (2) it is inversely proportional to the ``gap'' between $\fRw$ and $\fRn$; (3) the numerator $\fRw (1 - \fRw)$ is smaller than $1$ and, in fact, never exceeds $1/4$. For a fixed $\fRw$, one can see $\fRw (1 - \fRw)$ as the ``constant'' and $O\left(\frac{n}{\fRn - \fRw}\right)$ as the ``complexity''.

\subsection{\WQFull}

The next weight reduction problem we study is \emph{\WQFull} (or simply {\WQ}).
It requires the mapping to preserve the property that any subset of parties of weight greater than $\fQw$ obtains more than $\fQn$ tickets.
In some sense, {\WQ} is the opposite of the {\WRFull} problem discussed above.
More formally:

\begin{statement}[\WQFull]
    \begin{atreview}      Given $\fQw, \fQn \in (0, 1)$ and $w_1, \dots, w_n \in \mathbb{R}_{\ge 0}$\atadd{ such that $W \neq 0$} as input, find $t_1, \dots, t_n \in \mathbb{Z}_{\ge 0}$ such that \atreplace{$\sum_{i=1}^n t_i$}{$T$} is minimized, subject to the following restriction:
    
    \begin{center}
        $\forall S \subseteq [n]\text{ s.t. }w(S) > \fQw W: t(s) > \fQn T$
            \end{center}
    \end{atreview}
    \label{def:wqp}
\end{statement}

In \Cref{sec:wq-applications}, we show how to apply {\WQFull} to implement weighted versions of storage and broadcast protocols that rely on erasure and error-correcting codes for minimizing communication and storage complexity.

There exists a simple reduction between {\WR} and {\WQ}:

\begin{theorem} \label{thm:wr2wq}
    For any $\fQw, \fQn \in (0, 1)$ and $w_1, \dots, w_n \in \mathbb{R}_{\ge 0}$, the following problems are identical:
    \begin{enumerate}
        \item $\WQ(\fQw, \fQn, w_1, \dots, w_n)$
        \item $\WR(1-\fQw, 1-\fQn, w_1, \dots, w_n)$
    \end{enumerate}
\end{theorem}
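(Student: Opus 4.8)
The plan is to show that the two optimization problems have the same objective and the same feasible set; since both minimize $T$ over the same admissible integer tuples $(t_1, \dots, t_n)$, they are then identical as optimization problems. The objectives coincide trivially (both are ``minimize $T$''), so the whole argument reduces to proving that the two constraints cut out the same set of feasible assignments.

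The key tool is complementation. For any $S \subseteq [n]$ write $\bar S := [n] \setminus S$; then $w(\bar S) = W - w(S)$ and $t(\bar S) = T - t(S)$. First I would note that $S \mapsto \bar S$ is an involution, hence a bijection, on the subsets of $[n]$. Under this bijection the premise $w(S) > \fQw W$ that triggers the {\WQ} constraint becomes $w(\bar S) < (1 - \fQw) W$, which is exactly the premise that triggers the {\WR} constraint with $\fRw = 1 - \fQw$. Likewise, the conclusion $t(S) > \fQn T$ demanded by {\WQ} is equivalent to $t(\bar S) < (1 - \fQn) T$, which is precisely the conclusion demanded by {\WR} with $\fRn = 1 - \fQn$.

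Putting these together: a tuple $(t_1, \dots, t_n)$ satisfies the {\WQ} constraint for $(\fQw, \fQn)$ iff for every $S$ with $w(S) > \fQw W$ we have $t(S) > \fQn T$; reindexing via $S \mapsto \bar S$ (a bijection) this holds iff for every $S'$ with $w(S') < (1 - \fQw) W$ we have $t(S') < (1 - \fQn) T$, which is exactly the {\WR} constraint for $(1 - \fQw,\, 1 - \fQn)$. Since $\fQw, \fQn \in (0,1)$ guarantees $1 - \fQw,\, 1 - \fQn \in (0,1)$, the reduced {\WR} instance is well-posed. Hence the feasible sets coincide, and with identical objectives the two problems are the same.

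I do not expect any serious obstacle here; the proof is essentially bookkeeping. The only point requiring care is tracking the direction of the strict inequalities through complementation --- one must check that ``$>$'' for the original set turns into ``$<$'' for its complement in both the weight premise and the ticket conclusion, and that subtracting from $W$ (resp.\ $T$) sends the threshold $\fQw$ (resp.\ $\fQn$) to $1 - \fQw$ (resp.\ $1 - \fQn$) without altering strictness. Because every inequality involved is strict and complementation is an exact bijection, no boundary cases (sets meeting a threshold with equality) arise, and the correspondence is clean in both directions.
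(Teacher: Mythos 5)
Your proof is correct and follows essentially the same route as the paper's: both rest on complementation ($w([n]\setminus S) = W - w(S)$, $t([n]\setminus S) = T - t(S)$) to translate the {\WQ} constraint into the {\WR} constraint with parameters $1-\fQw$, $1-\fQn$. The only cosmetic difference is that you establish the equality of feasible sets in one pass via the involution $S \mapsto [n]\setminus S$, whereas the paper proves one inclusion and declares the converse analogous.
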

\begin{proof}
    Let us prove that any valid solution to $\WR(1-\fQw, 1-\fQn, w_1, \dots, w_n)$ is a valid solution to $\WQ(\fQw, \fQn, w_1, \dots, w_n)$. 
    The inverse can be proven analogously.
        Indeed, if\atadd{ $t_1, \dots, t_n$ is a valid solution for $\WR(1-\fQw, 1-\fQn, w_1, \dots, w_n)$, then} $\forall S \subseteq [n]$ such that $w(S) > \fQw W$\atreplace{:}{ it holds that} 
    $w([n] \setminus S) = W - w(S) < (1 - \fQw) W$.
    Hence, $t([n] \setminus S) < (1 - \fQn) T$
    and $t(S) = T - t([n] \setminus S) > \fQn T$.
\end{proof}

From \Cref{thm:wr-bound,thm:wr2wq}, we obtain the following:
\begin{corollary}[{\WQ} upper bound] \label{col:wq-bound}
    For any $\fQw, \fQn \in (0, 1)$ such that $\fQn < \fQw$: there exists a solution to the {\WQFull} problem with $T \le \left\lceil \frac{\fQw(1 - \fQw)}{\fQw - \fQn} n\right\rceil$
        \end{corollary}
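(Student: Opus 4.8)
The plan is to obtain this as an immediate consequence of the {\WR} upper bound (\Cref{thm:wr-bound}) together with the equivalence established in \Cref{thm:wr2wq}, so that no new combinatorial argument is required. Concretely, I would set $\fRw := 1 - \fQw$ and $\fRn := 1 - \fQn$. By \Cref{thm:wr2wq}, the instance $\WQ(\fQw, \fQn, w_1, \dots, w_n)$ is identical to $\WR(\fRw, \fRn, w_1, \dots, w_n)$ on the very same weights, so it suffices to exhibit a good solution for the latter and read off its cost $T$.

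Before invoking \Cref{thm:wr-bound}, I would verify that its hypotheses hold for the chosen parameters. Since $\fQw, \fQn \in (0,1)$, both $\fRw = 1 - \fQw$ and $\fRn = 1 - \fQn$ again lie in $(0,1)$. Moreover, the required strict inequality $\fRw < \fRn$ is equivalent to $1 - \fQw < 1 - \fQn$, i.e.\ to $\fQn < \fQw$, which is precisely the assumption of the corollary. Hence \Cref{thm:wr-bound} applies and yields a solution with $T \le \left\lceil \frac{\fRw(1 - \fRw)}{\fRn - \fRw} n \right\rceil$.

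It then remains only to rewrite this bound back in terms of $\fQw$ and $\fQn$. Substituting $\fRw = 1 - \fQw$ gives $1 - \fRw = \fQw$, so the numerator $\fRw(1 - \fRw)$ becomes $\fQw(1 - \fQw)$, while the denominator $\fRn - \fRw = (1 - \fQn) - (1 - \fQw)$ collapses to $\fQw - \fQn$. The bound therefore reads $T \le \left\lceil \frac{\fQw(1 - \fQw)}{\fQw - \fQn} n \right\rceil$, exactly as claimed. I do not expect any genuine obstacle: the only points meriting care are confirming that the direction of the gap condition is preserved under the complementation (so that $\fQn < \fQw$ really does translate into the hypothesis $\fRw < \fRn$ of \Cref{thm:wr-bound}), and observing that the substitution maps the {\WR} expression onto the {\WQ} expression verbatim, leaving $n$ and the ceiling untouched and rewriting only the rational coefficient.
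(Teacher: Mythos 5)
Your proof is correct and follows exactly the paper's own route: the paper derives \Cref{col:wq-bound} precisely by combining \Cref{thm:wr-bound} with the complementation reduction of \Cref{thm:wr2wq}, just as you do. Your verification that $\fQn < \fQw$ translates into $\fRw < \fRn$ and the substitution yielding $\fQw(1-\fQw)/(\fQw-\fQn)$ are both accurate, so nothing is missing.
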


\subsection{\WSFull}

Finally, {\WSFull}, in a sense, combines {\WR} and {\WQ}: it has\atadd{ two} parameters\atadd{,} $\alpha$ and $\beta$\atadd{,} and guarantees that any set of weight $\beta$ receives more tickets than any set of weight $\alpha$.
Intuitively, it is similar to solving ${\WR}(\alpha, \gamma)$ and $\WQ(\beta, \gamma)$ for some unknown \atreplace{$\gamma \in (\alpha, \beta)$}{$\gamma \in (0, 1)$} \emph{at the same time}, i.e., with just a single ticket assignment.

\begin{statement}[\WSFull]
    \begin{atreview}
    Given $\alpha, \beta \in (0, 1)$ and $w_1, \dots, w_n \in \mathbb{R}_{\ge 0}$\atadd{ such that $W \neq 0$} as input, find $t_1, \dots, t_n \in \mathbb{Z}_{\ge 0}$ such that \atreplace{$\sum_{i=1}^n t_i$}{$T$} is minimized, subject to the following restriction:
    
    \begin{center}
        $\forall S_1, S_2 \subseteq [n]\text{ s.t. }w(S_1) < \alpha W$ and $w(S_2) > \beta W$:\\
         $t(S_1) < t(S_2)$
    \end{center}
    \end{atreview}
    \label{def:wsp}
\end{statement}

In this paper, we focus primarily on {\WRFull} and {\WQFull} because they are sufficient for most applications and, being less restrictive on the ticket assignment, permit more efficient solutions. However, for completeness, we also provide an upper bound on {\WSFull}\atadd{ and support it in our approximate solver described in \Cref{sec:algo}}.

\begin{theorem}[{\WS} upper bound] \label{thm:ws-bound}
For any $\alpha, \beta \in (0, 1)$ such that $\alpha < \beta$: there exists a solution to the {\WSFull} problem with $T \le \frac{(\alpha + \beta)(1 - \alpha)}{\beta - \alpha} n$.
\end{theorem}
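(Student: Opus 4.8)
The plan is to reduce {\WSShort} to the two problems we have already bounded. Observe first that {\WSShort}$(\alpha,\beta)$ is implied by solving {\WRShort}$(\alpha,\gamma)$ and {\WQShort}$(\beta,\gamma)$ \emph{with a single ticket assignment} for some common relative threshold $\gamma \in (0,1)$: if one assignment $t_1,\dots,t_n$ simultaneously guarantees that every $S_1$ with $w(S_1) < \alpha W$ has $t(S_1) < \gamma T$ and that every $S_2$ with $w(S_2) > \beta W$ has $t(S_2) > \gamma T$, then $t(S_1) < \gamma T < t(S_2)$, which is exactly the constraint of \Cref{def:wsp}. Thus it suffices to exhibit one assignment for which the value $\gamma T$ separates all light sets from all heavy sets, and then to bound its $T$.

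To build such an assignment I would adapt the scaled-rounding technique behind \Cref{thm:wr-bound}: fix a density $D$ and let each $t_i$ be a rounding of $D w_i$ with $0 \le D w_i - t_i < 1$. The key accounting tool is the total rounding mass $\Phi := \sum_{i}(D w_i - t_i) = WD - T$, which lets me control both sides at once. On the one hand, $t_i \le D w_i$ gives $t(S_1) \le D\,w(S_1) < \alpha W D$ for every light set, so light sets never exceed $\alpha W D$ tickets. On the other hand, since the deficit of any set is at most the global mass, $t(S_2) = D\,w(S_2) - \sum_{i \in S_2}(Dw_i - t_i) > \beta W D - \Phi$ for every heavy set. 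The separation therefore holds as soon as $\alpha W D \le \beta W D - \Phi$, i.e.\ $\Phi \le (\beta-\alpha)WD$; rewriting through $WD = T + \Phi$ turns this into a lower bound on $T$. Choosing $D$ (equivalently $\gamma$) to make this bound tight while keeping $T = WD - \Phi$ as small as possible is the optimization that should produce the constant $\frac{(\alpha+\beta)(1-\alpha)}{\beta-\alpha}$.

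I expect the main obstacle to be the integer rounding rather than the algebra. A naive uniform floor $t_i = \lfloor D w_i\rfloor$ is insufficient: when many weights coincide, $T$ jumps by up to $n$ as $D$ grows, forcing an overshoot that only yields the weaker bound $\tfrac{n}{\beta-\alpha}$; since one always has $(\alpha+\beta)(1-\alpha) < 1$, this is strictly worse than the target. To reach the stated constant one must instead round \emph{non-uniformly} --- choosing which parties to round up, as in the proof of \Cref{thm:wr-bound} (and, dually, \Cref{col:wq-bound}) --- so that the light-set excess and the heavy-set deficit are both kept small relative to $T$ against the worst-case weight distribution. The remaining steps are routine: the subset weights $w(S)$ take finitely many values, so each strict inequality in \Cref{def:wsp} holds with a positive margin (here $W \neq 0$ is used), and collecting the estimates gives a solution with $T \le \frac{(\alpha+\beta)(1-\alpha)}{\beta-\alpha} n$.
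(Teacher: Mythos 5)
Your high-level frame is the right one and matches the paper's own intuition: a single assignment in which a threshold $\gamma T$ separates every light set from every heavy set solves \Cref{def:wsp}, and the paper does work with a scaled-and-rounded family. Your diagnosis that the plain floor $t_i = \lfloor D w_i \rfloor$ cannot beat $\frac{n}{\beta-\alpha}$ is also correct. The gap is that everything after that diagnosis is asserted rather than proved, and the accounting you set up cannot be completed to reach the stated constant. You compare a light set \emph{directly} against a heavy set: an upper bound $\alpha W D + (\text{excess})$ versus a lower bound $\beta W D - (\text{deficit})$, with the deficit controlled by the global rounding mass $\Phi$ (or, after your proposed fix, by per-member bounds depending on who is rounded up). Against worst-case weights this is lossy in an essential way: a light set and a heavy set can \emph{both} have close to $n$ members (think of many parties of negligible weight), the two sets overlap, and their sizes do not telescope; so the excess-plus-deficit budget your comparison must absorb is of order $n$ however the round-up set is chosen, while the slack available is only $(\beta-\alpha)WD$. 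That forces $WD$ of order $\frac{n}{\beta-\alpha}$ all over again --- ``non-uniform rounding'' does not repair the direct comparison, because the loss sits in the comparison itself, not in the rounding.

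The paper's proof needs two ideas you would still have to supply. First, the rounding is a \emph{uniformly biased} floor, $t_i := \lfloor s w_i + \gamma \rfloor$ with $\gamma := \frac{\alpha+\beta}{2}$, so every party has excess at most $\gamma$ and deficit less than $1-\gamma$. Second --- and this is the crucial step --- a violation $t(S_1) \ge t(S_2)$ is split into the two cases $t(S_1) \ge \gamma T$ or $t(\overline{S_2}) > (1-\gamma) T$, and in each case the offending set is compared with its \emph{own complement}, never with the other set. Since $|S| + |\overline{S}| = n$ exactly, after weighting the two sides by $(1-\gamma)$ and $\gamma$ the size-dependent error terms collapse to exactly $\gamma(1-\gamma)n$; and since the complement of an $\alpha$-light set carries weight more than $(1-\alpha)W$ (not merely $\beta W$), case one gives $s < \frac{\gamma(1-\gamma)n}{(\gamma-\alpha)W}$ and case two gives $s < \frac{\gamma(1-\gamma)n}{(\beta-\gamma)W}$. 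The choice $\gamma = \frac{\alpha+\beta}{2}$ makes the two bounds coincide, and taking $s$ equal to that common value with $T \le sW + \gamma n$ produces exactly $\frac{(\alpha+\beta)(1-\alpha)}{\beta-\alpha}n$; this is where the factor $(1-\alpha)$ comes from. Note also that your appeal to the machinery behind \Cref{thm:wr-bound} (selectively taking tickets from boundary parties, local minimality) is a red herring: the WS proof needs none of it. As written, your proposal stops exactly where the real proof has to start.
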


Note that the numerator $(\alpha + \beta)(1 - \alpha)$ is always smaller than 1 for $0 < \alpha < \beta < 1$.

\section{Swiper: Approximate solver for Weight Reduction problems}\label{sec:algo}

\begin{table*}[htb!]
    \centering
    \small
    \newcommand{\tp}[1]{{\smaller #1}}
    \resizebox{\linewidth}{!}{
    \begin{tabular}{|c|c|c|c|c|c|c|c|}
    \hline
        \multirow{6}{*}{\thead{System}}
        & \multicolumn{7}{c|}{\thead{number of tickets using {\Swiper}}}
    \\ \cline{2-8}
        & \multicolumn{4}{c|}{\thead{{\WR} and {\WQ}}}
        & \multicolumn{3}{c|}{\thead{{\WS}}}
    \\ \cline{2-8}
                & \makecell[c]{\tp{$\fRw = 1/4$} \\ \tp{$\fRn = 1/3$}} 
        & \makecell[c]{\tp{$\fRw = 1/3$} \\ \tp{$\fRn = 3/8$}} 
        & \makecell[c]{\tp{$\fRw = 1/3$} \\ \tp{$\fRn = 1/2$}} 
        & \makecell[c]{\tp{$\fRw = 2/3$} \\ \tp{$\fRn = 3/4$}} 
        & \multirow{3}{*}{\makecell[c]{\tp{$\alpha = 1/4$} \\ \tp{$\beta = 1/3$}}}
        & \multirow{3}{*}{\makecell[c]{\tp{$\alpha = 1/3$} \\ \tp{$\beta = 1/2$}}}
        & \multirow{3}{*}{\makecell[c]{\tp{$\alpha = 2/3$} \\ \tp{$\beta = 3/4$}}}
        \\
    \cline{2-5}
                & \makecell[c]{\tp{$\fQw = 3/4$} \\ \tp{$\fQn = 2/3$}} 
        & \makecell[c]{\tp{$\fQw = 2/3$} \\ \tp{$\fQn = 5/8$}} 
        & \makecell[c]{\tp{$\fQw = 2/3$} \\ \tp{$\fQn = 1/2$}} 
        & \makecell[c]{\tp{$\fQw = 1/3$} \\ \tp{$\fQn = 1/4$}}
        & %
        & %
        & %
                \\
    \hline\hline
    \makecell[c]{Aptos~\cite{aptos_white,aptos_stake} \\ $W = 8.47\times10^8$ \; $n = 104$}
        & 85  & 235 & 27 & 110 & 385 & 98 & 437 (+1) \\
    \hline
    \makecell[c]{Tezos~\cite{tezos_white, tezos_stake} \\ $W = 6.76\times10^8$ \; $n = 382$}
        & 133 & 425 & 61 (+8) & 258 (+1) & 670 & 233 (+2) & 811 \\
    \hline
    \makecell[c]{Filecoin~\cite{filecoin_white,filecoin_stake} \\ $W = 2.52\times10^{19}$ \; $n = 3\,700$}
        & 3\,091 & 8\,233 & 1\,533 & 4\,691 & 10\,485 & 4\,838 & 11\,858 \\
    \hline
    \makecell[c]{Algorand~\cite{algorand_white,algorand_stake} \\ $W = 9.72\times10^9$ \; $n = 42\,920$}
        & 745 & 13\,475 & 293 & 6\,258 & 46\,009 & 2\,188 & 64\,189 \\
    \hline
    \end{tabular}
    }
    
    \medskip
    \caption{Number of tickets allocated by the {\Swiper} protocol on sample weight distributions.\atremove{ W denotes the total weight of the parties in the blockchain, while N denotes their total number.}\atadd{ In the few cases when the linear mode yields more tickets than the standard (full) mode, the difference is written in parentheses.}}
    \label{tab:evaluations}
\end{table*}

\sloppy
To provide a constructive proof for \Cref{thm:wr-bound,col:wq-bound,thm:ws-bound} as well as to facilitate practical applications of weight reduction problems, we designed {\Swiper}\atrev{---}a fast approximate solver for the three weight reduction problems defined in this paper.
{\Swiper} enjoys a number of desirable properties:
\begin{enumerate}
    \item \textbf{Robustness:} It always respects the upper bounds stated in \Cref{sec:problem-statement}. This means that even under a malicious distribution of weights, the number of assigned tickets will be within a known limit, linear in the number of parties.
    We present the algorithm in \Cref{subsec:algorithm} and prove the upper bounds in \Cref{sec:proof}.

    \item \textbf{Determinism:} {\Swiper} is a deterministic protocol. Hence, when the initial weights are common knowledge, each party can run it locally and all parties will obtain the same result. This eliminates the need for executing \atreplace{a complex consensus protocol}{any complex protocol} to agree on the ticket assignment.

    \item \textbf{Allocation efficiency:} 
        \atrev{As we explore in detail in Section 7, Swiper performs remarkably well on real-world weight distributions, often allocating far fewer tickets than predicted by the upper bounds.}
    In \Cref{tab:evaluations}, we summarize the number of tickets allocated by {\Swiper} on the distribution of funds in four major blockchain systems~\cite{aptos_stake,tezos_stake,filecoin_stake,algorand_stake} with some example thresholds. Notice that, in many cases, the number of tickets is actually below the number of users. This happens partly due to the distributions being significantly skewed and a large number of users actually owning only a small fraction of the total funds.

        \item \textbf{Computational efficiency:} \atrev{Assuming that the thresholds ($ \alpha, \fRw, \fRn, \beta, \fQw, \fQn$) are constants, the runtime of {\Swiper} is either $\tilde{O}(n)$ (in \mbox{\normalfont\texttt{{-}{-}linear}} mode) or $\tilde{O}(n^2)$ (in standard mode).
    The difference in the implementation of the two modes is detailed in \Cref{subsec:algorithm}.}
    Both modes respect the upper bounds and, as can be seen in \Cref{tab:evaluations}, in practice, usually yield identical or almost identical results.
                    \end{enumerate}

\subsection{Algorithm and implementation} \label{subsec:algorithm}

\myparagraph{Overall structure}
In {\Swiper}, we consider ticket assignments of a special form.
Let $c$ be a fixed number between $0$ and $1$ (we will precisely specify $c$ later in this section).
Let $t(s, k)$ be the result of the following procedure: first, let $t_i := \lfloor s w_i + c \rfloor$; then, consider the parties that ended up ``on the border'', i.e., that would lose a ticket if we decreased $s$ any further\footnote{This corresponds to all $i$ such that $s w_i + c$ is an integer.} and take 1 ticket from all but arbitrary (yet deterministically chosen) $k$ of them.

More formally, let $\mathcal{B}_s := \{i \mid s w_i + c\text{ is integer}\}$ and $\mathcal{K}_{s,k} := \{\text{arbitrary $k$ members of $\mathcal{B}_s$}\}$.
Then:
\begin{center}
    $t(s, k)_i := \begin{cases}
        \lfloor s w_i + c \rfloor - 1, &\text{if $i \in (\mathcal{B}_s \setminus \mathcal{K}_{s,k})$} \\
        \lfloor s w_i + c \rfloor, &\text{otherwise}
    \end{cases}$
\end{center}

The crucial observation is that, despite having two indices, this family of ticket assignments can be totally ordered, each ticket assignment having precisely one ticket more than the previous one (after removing duplicates).
Indeed, let $T_{s,k} := \sum_{i=1}^{n} t(s,k)_i$.
Then, for $0 < k < |\mathcal{B}_s|$, by definition, $T(s, k+1) = T(s, k) + 1$. 
Moreover, if $s'$ is the smallest number greater than $s$ such that $|\mathcal{B}_{s'}| \neq 0$, then $T(s', 1) = T(s', 0) + 1 = T(s, |\mathcal{B}_s|) + 1$.
For any $s''$ in between $s$ and $s'$, $t(s'', *) = t(s', 0) = t(s, |\mathcal{B}_s|)$.

{\Swiper} finds a \emph{local minimum} in this family of ticket assignments, i.e., $s^*$ and $k^*$ such that $t(s^*, k^*)$ is viable (satisfies the problem requirements), but, for any sufficiently small $\varepsilon$ and any $k'$, $t(s^* - \varepsilon, k')$ is not viable and neither is $t(s^*, k^* - 1)$.

\myparagraph{Theoretical foundations}
In \Cref{sec:proof}, we prove that, by selecting the constant $c$ as $\fRw$ in case of {\WRFull}, $(1 - \beta_w)$ in case of {\WQFull}, and $\frac{\alpha + \beta}{2}$ in case of {\WSFull}, the resulting ticket assignment always satisfies the bounds stated in \Cref{sec:problem-statement} (\Cref{thm:wr-bound,col:wq-bound,thm:ws-bound}).\footnote{To obtain these specific values, we first considered the general case for an arbitrary $c$ and then found the values of $c$ that minimized the upper bounds.}
The proof works by demonstrating that any invalid ticket assignment of this form yields at least one fewer tickets than the stated upper bounds. Any local minimum yields just 1 ticket more than \emph{some} invalid ticket assignment and, thus, \emph{fewer or equal} to the upper bounds. This proof structure is important for achieving practical efficiency.

\myparagraph{Bootstrapping the solution}
As mentioned above, 
if the ticket assignment yielded by some tuple $(s, k)$ 
is invalid (does not satisfy the problem's requirement), then the total number of tickets in this ticket assignment must be smaller than the upper bound.
Conversely, 
if some tuple $(s, k)$ yields a ticket assignment with the total number of tickets greater or equal to the upper bound,
we can conclude that the resulting ticket assignment is valid.
This fact alone allows us to quickly arrive at a valid solution satisfying the upper bound by simply finding a tuple $(s, k)$ that yields the number of tickets exactly equal to the upper bound.
This can be done efficiently with a binary search.

\myparagraph{Finding the local minimum}

Thanks to the fact that we are only looking for a \emph{local} minimum in the considered family of ticket assignments,
we can find it efficiently with a binary search,
assuming an efficient algorithm for verifying the validity of a ticket assignment.
However, \emph{in the general case}, verifying the validity of a ticket assignment looks a lot like a (co-)NP-hard problem.
Indeed, one can easily see that verifying a solution to {\WRFull} as defined in \Cref{sec:problem-statement} is equivalent to solving a particular instance of Knapsack---the famous NP-hard optimization problem~\cite{KnapsackProblems04}.

Fortunately, for the specific family of ticket assignments that {\Swiper} considers (denoted as $t(s, k)$ earlier in this section), an efficient algorithm does exist.
Indeed, we have already established that any ticket assignment in this family with the total number of tickets ($T$) exceeding the upper bound is valid.
If, on the other hand, $T$ is smaller than the upper bound, then we can use the ``dynamic programming by profits'' approach~\cite[Lemma 2.3.2]{KnapsackProblems04} to solve Knapsack in time $O(Tn)$.
Assuming $\alpha$s and $\beta$s to be constant, $T$ is $O(n)$ and $O(Tn) = O(n^2)$.

\myparagraph{Practical efficiency and the \mbox{\normalfont\texttt{{-}{-}linear}} mode}
Solving Knapsack to verify the validity of $t(s, k)$ is the main bottleneck for the algorithm. To achieve better practical efficiency, {\Swiper} uses well-known quasilinear-time Knapsack lower and upper bounds to filter out as many solutions as possible without invoking the knapsack solver.

The upper bound allows us to implement a \emph{conservative check}, i.e., it may yield false negatives (falsely declaring $t(s,k)$ as invalid), but never false positives (falsely declaring $t(s,k)$ as valid).
In \mbox{\normalfont\texttt{{-}{-}linear}} mode, {\Swiper} only relies on the upper bound and is guaranteed to find a valid solution, albeit not necessarily a locally minimal one.

Additionally, the lower bound allows us to implement a \emph{liberal check}, i.e., it may yield false positives, but never false negatives.
By combining the two, we can implement a quick test that can return one of the three values (``valid'', ``invalid'', or ``uncertain'').
In the full mode (i.e., when \mbox{\normalfont\texttt{{-}{-}linear}} is not provided), {\Swiper} only invokes the full knapsack solver (with $O(n^2)$ time complexity) when the quick test returns ``uncertain'', which speeds up the algorithm by a more than a factor of 3 on inputs with large enough resulting number of tickets.

\myparagraph{Prototype implementation}
We provide the full code for a prototype of {\Swiper} and the data used to generate \Cref{tab:evaluations} in a public GitHub repository\footnote{\url{https://github.com/DCL-TelecomParis/swiper}}.
The prototype is implemented in Python, with JIT compilation used for certain computation-heavy parts.
It utilizes the \texttt{Fraction} class to avoid any possible rounding errors.
If sub-second latencies are required by the application, an implementation in a more performance-oriented programming language as well as the use of rounding (be it floating- or fixed-point arithmetic) may be necessary.

\section{Applications of {\WRFull}} \label{sec:wr-applications}

\subsection{Distributed random number generation}
\label{sec:random-beacon}

As a motivating example for {\WRFull}, consider the \emph{Distributed Random Number Generation} problem. Typically, it needs to satisfy two properties:
\begin{itemize}
    \item If \emph{all} honest parties cooperate, they can generate the next random number;
    
    \item Unless \emph{at least one} honest party wants to open the next random number, it remains completely unpredictable to the adversary.
\end{itemize}

Perhaps, the simplest way it can be achieved~\cite{pre-shared-common-coins} is by having a trusted party generate the random number and pre-distribute it using secret sharing~\cite{Sha79}, such that each party gets a number $t_i$ of shares and any subset of parties possessing at least $\lceil \fRn T\rceil$ shares (where $T = \sum_{i=1}^n t_i$) can reconstruct the secret, but no set of parties possessing less than this amount of shares can learn anything about the secret.

Thus, by setting $\fRw$ to the resilience of the protocol ($\fRw := \fAw$) and $\fRn \le \frac{1}{2}$, we can guarantee that:
\begin{itemize}
    \item Honest participants will receive more than $(1 - \fRn) T \ge \lceil \fRn T\rceil$ shares and, hence, will be able to reconstruct the random number.
    
    \item Corrupt participants will receive less than $\fRn T$ shares and, hence, will not be able to reconstruct the random number unless some honest party also wants to open it;
\end{itemize}

Practical randomness beacons~\cite{random-oracles,sok-randomness-beacons} operate similarly, only employing \emph{unique threshold signatures}~\cite{rsa-threshold-signatures,ThresholdBLS} in order to be able to reuse the same secret multiple times.
The described weighted solution still applies to such approaches unchanged.

\subsection{Blunt Secret Sharing and derivatives} \label{subsec:blunt}

In cryptography, certain actions have an associated access structure $\mathbb{A}$ that determines all sets of parties that are able to perform these actions once they collaborate.
Traditional $(n, k+1)$-threshold systems can be seen as a particular access structure $\mathbb{A}_{n}(\alpha) = \{P \subseteq [n]: |P| > \alpha n\}$, where $\alpha := \frac{k}{n}$.
Analogously, a \emph{weighted} threshold access structure can be defined as
$\mathbb{A}_w(\alpha) = \{P \subseteq \Pi: \sum_{i \in P} w_i > \alpha \sum_{i \in \Pi} w_i\}$.

We can also define the \emph{adversary structure} $\mathbb{F} \subseteq 2^\Pi$, the set of all sets of parties that can be simultaneously corrupted at any given execution. Often, the adversary structure is also defined by a threshold, with a maximum corruptible weight fraction $\fAw$, i.e., $\mathbb{F}_w(\fAw) = \{P \subset \Pi: \sum_{i \in P} w_i < \fAw \sum_{i \in \Pi} w_i\}$.

While threshold access structures are commonly studied in cryptography and are applied in numerous distributed protocols, in practice, as we illustrate in \Cref{sec:derived-applications}, it is often sufficient if the access structure provides the following two properties, generalizing the requirements of the random beacon presented in \Cref{sec:random-beacon}:

\begin{itemize}
    \item There exists at least one set entirely composed of honest parties that belongs to the access structure. This typically guarantees the accompanying protocol's \emph{liveness properties}.
    \item Any set containing only corrupt parties does not belong to the access structure, as this would break \emph{safety properties}. 
\end{itemize}

Hence, we define a \emph{blunt access structure} as follows:

\begin{definition}[Blunt access structure] \label{def:blunt-as}
    Given a set of parties $\Pi$ and the adversary structure $\mathbb{F} \subseteq 2^{\Pi}$, $\mathbb{A}$ is a blunt access structure w.r.t. $\mathbb{F}$ if $(\forall F \in \mathbb{F}: F \not \in \mathbb{A}) \text{ and } (\exists A \in \mathbb{A}: A \cap F = \varnothing)$.
\end{definition}

The following theorem shows that solving {\WR} is sufficient to implement weighted cryptographic protocols with blunt access structures by a reduction to their nominal counterparts.

\begin{theorem}
    Given a set of parties, a protocol $\protocol$ implementing a cryptographic primitive with nominal threshold access structure $\mathbb{A}_n(\fRn)$, for $\fRn \le \frac{1}{2}$, we obtain a protocol $\protocol'$ implementing a blunt access structure w.r.t. adversarial structure $\mathbb{F}_w(\fAw)$, assuming $\fAw < \fRn$, by solving {\WRFull} with the corresponding parameters $\fRn$ and $\fRw := \fAw$.
    This is accomplished by instantiating $\protocol$ with $\hat{n} = T$ virtual users and allowing party $i$ to control $t_i$ of them.%
    \footnote{Recall that $t_i$ is the number of tickets assigned to party $i$ and $T$ is the total number of tickets assigned by the solution to the weight reduction problem (in this case, to {\WR}). See \Cref{sec:problem-statement} for details.}
    \label{thm:reductioncorrect}
\end{theorem}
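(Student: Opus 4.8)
The plan is to follow the construction literally. Instantiating $\protocol$ with $\hat{n} = T$ virtual users and letting party $i$ control $t_i$ of them induces, on the real parties, the access structure $\mathbb{A}' = \{S \subseteq [n] : t(S) > \fRn T\}$: a coalition $S$ collectively controls exactly $t(S)$ of the $T$ virtual users, and the nominal structure $\mathbb{A}_n(\fRn)$ admits precisely those virtual coalitions of size exceeding $\fRn \hat{n} = \fRn T$. The goal then reduces to verifying that $\mathbb{A}'$ is blunt with respect to $\mathbb{F}_w(\fAw)$, i.e., to checking the two defining clauses of \Cref{def:blunt-as}.

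First I would establish the safety clause ($\forall F \in \mathbb{F}_w(\fAw): F \notin \mathbb{A}'$). Any $F \in \mathbb{F}_w(\fAw)$ satisfies $w(F) < \fAw W = \fRw W$ by the choice $\fRw := \fAw$. This is exactly the hypothesis of the {\WR} constraint (\Cref{def:wrp}), which immediately yields $t(F) < \fRn T$, so $F \notin \mathbb{A}'$.

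Next I would establish the liveness clause by exhibiting, for an arbitrary corrupt set $F \in \mathbb{F}_w(\fAw)$, the honest complement $H := [n] \setminus F$ as the required disjoint access set. Since $H \cap F = \varnothing$ by construction, it only remains to show $H \in \mathbb{A}'$. Writing $t(H) = T - t(F)$ and reusing the bound $t(F) < \fRn T$ from the safety step gives $t(H) > (1 - \fRn) T$; the assumption $\fRn \le \tfrac{1}{2}$ then gives $(1 - \fRn)T \ge \fRn T$, so $t(H) > \fRn T$ and hence $H \in \mathbb{A}'$. This is the only place the hypothesis $\fRn \le \tfrac{1}{2}$ is used, and it is what turns the honest majority of tickets into a live coalition.

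The remaining, and conceptually most delicate, step is the reduction itself: arguing that correctness and security of $\protocol$ transfer to $\protocol'$. The argument is a black-box simulation — each honest real party faithfully runs the $t_i$ virtual instances it controls, and an adversary corrupting a real set $F$ is mapped to a $\protocol$-adversary corrupting the $t(F)$ associated virtual users. I expect this to be the main obstacle to state cleanly: one must argue that corrupting $F$ never exceeds the nominal tolerance (which is precisely the safety clause, $t(F) < \fRn T$) and that the honest-party simulation produces a view indistinguishable from a genuine $\protocol$ execution with $T$ parties, so that the guarantees of $\protocol$ against sub-threshold virtual adversaries carry over verbatim. Once the access-structure membership is pinned down as above, this transfer is standard, and the two inequality checks carry all the genuinely new content.
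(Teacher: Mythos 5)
Your proposal is correct and follows essentially the same argument as the paper's own proof: the safety clause follows directly from the {\WR} constraint with $\fRw := \fAw$ (corrupt coalitions get fewer than $\fRn T$ tickets), and the liveness clause follows because honest parties then hold more than $(1-\fRn)T \ge \fRn T$ tickets, using $\fRn \le \tfrac{1}{2}$. The paper's proof is simply a terser version of these two inequality checks; your additional remarks on the black-box simulation make explicit what the paper leaves implicit, but do not change the route.
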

\begin{proof}
        By definition of {\WR}, once it distributes $T$ tickets, the number of tickets (and, hence, virtual users) allocated to the corrupt parties will be less than $\fRn T$. Hence, no element of the adversary structure shall appear in the resulting access structure. In addition, honest participants will receive more than $(1-\fRn)T \ge \fRn T$ (recall that $\fRn \le \frac{1}{2}$) tickets (and, hence, virtual users), ensuring that there exists a set consisting of only honest parties in the access structure.
\end{proof}

Note that all participants must agree on how many virtual users are assigned to each party, as nominal protocols typically assume that the membership is common knowledge. To this end, it is sufficient for all parties to run an agreed upon \emph{deterministic} weight-restriction protocol (e.g., {\Swiper}).

Among other things, this way, one can obtain weighted versions of secret sharing~\cite{Sha79}, distributed random number generation~\cite{random-oracles}, threshold signatures~\cite{ThresholdBLS}, threshold encryption~\cite{threshold-cryptosystems}, and threshold fully-homomorphic encryption~\cite{ThFHE-jain}, all with blunt access structures.
In the next section, we discuss how to do it for other access structures.

\subsection{Tight Secret Sharing and derivatives} \label{subsec:tight-secret-sharing}

Although a blunt access structure is sufficient for a large spectrum of applications, more restrictive access structures are sometimes necessary as well.
Here, we present a straightforward approach that involves just one extra round of communication to transform a blunt access structure 
into a weighted threshold access structure.\footnote{In fact, this can be further generalized to arbitrary access structures.}
This means that our construction can be readily 
utilized in any protocol that already uses threshold cryptography without requiring significant redesign efforts.

Given a protocol $\protocol$ implementing a certain primitive of distributed cryptography (e.g., threshold signatures~\cite{threshold-cryptosystems}) with a blunt access structure, we can obtain a protocol $\protocol'$ implementing the same protocol with a weighted threshold access structure $\mathbb{A}_w(\beta)$ as follows:
whenever an honest party wants to perform an action $\mathcal{A}$ (e.g., produce a threshold signature), instead it simply broadcasts a message ``voting'' for the action to be performed, without actually revealing any secret data (e.g., its threshold signature share). Then, when an honest party receives such votes from parties with a total weight more than $\beta W$, it participates in the action $\mathcal{A}$, according to the underlying protocol $\protocol$ (e.g., broadcasts its threshold signature share).
Thus, we can notice that:
\begin{enumerate}
    \item Unless a threshold of parties (potentially including Byzantine) cast votes for $\mathcal{A}$, no honest party will participate in $\mathcal{A}$ in $\protocol$. Thus, by \Cref{def:blunt-as}, action $\mathcal{A}$ will not be performed;
    
    \item If a threshold of parties cast votes for $\mathcal{A}$, all honest parties will eventually participate in $\mathcal{A}$ according to $\protocol$, thus, by \Cref{def:blunt-as}, the action will be performed.
\end{enumerate}

\subsection{Black-Box transformation} \label{subsec:black-box}

The same approach of allocating a number of virtual users according to the number of tickets as described in \Cref{subsec:blunt} can be applied to arbitrary distributed protocols.

Given a nominal protocol $\protocol$, the ``virtual users'' approach allows us to define a protocol $\protocol'$ that operates in the weighted model by, essentially, emulating the nominal model, as long as we can solve {\WRFull} with parameters $\fRw := \fAw$ and $\fRn := \fAn$.
If $\fAw < \fAn$, by \Cref{thm:wr-bound}, $T = \sum_{i \in [n]} t_i$ will be at most $O\left(\frac{n}{\fAn - \fAw}\right)$.
In $\protocol'$, each party $i$ participates in $\protocol$ with $t_i$ virtual identities.
Two components of the transformation depend on the problem at hand (but not on the underlying protocol $\protocol$):
\begin{enumerate}
    \item Mapping the input of $i$ in $\protocol'$ to the inputs of its virtual identities in $\protocol$;

    \item Treatment of the outputs of $i$'s virtual identities in $\protocol$ to produce the outputs in $\protocol'$.
\end{enumerate}

We illustrate the black-box transformation with two examples: Validated Byzantine Agreement~\cite{vaba-cachin} and Single Secret Leader Election~\cite{boneh2020single}.

\subparagraph*{\bf Consensus.}

For concreteness, let us consider the problem of Validated Byzantine Agreement (VBA)~\cite{vaba-cachin}.
However, one can easily verify that the same logic will apply to most, if not all, of the many types of consensus and state machine replication, including both crash and Byzantine fault-tolerant ones.
\begin{definition}
    A protocol solves validated Byzantine agreement with external validity predicate $\mathcal{V}$ if it satisfies the following conditions:
    \begin{description}
        \item[Liveness:] Each honest party outputs a value.

        \item[Agreement:] No two honest parties can output different values.

        \item[External Validity:] If an honest party outputs $v$, then $\mathcal{V}(v)$ holds.
        
        \item[Integrity:] If all parties are honest, and if some party decides $v$, then $v$ is the input of some party.
        
        \item[Efficiency:] The communication complexity is probabilistically uniformly bounded.
   \end{description}
\end{definition}

Consider an arbitrary protocol $\protocol$ that solves the problem for some external validity predicate $\mathcal{V}$.
Let $\protocol'$ be the protocol obtained from $\protocol$ by applying the transformation described above with the problem-specific part defined as follows:
\begin{enumerate}
    \item The input of all virtual identities of party $i$ in $\protocol$ is the same as $i$'s input in $\protocol'$;
    \item If $t_i \neq 0$, party $i$ outputs the value output by its first virtual identity and sends it to all parties $j$ such that $t_j = 0$. If $t_i = 0$, it waits for messages from parties with total weight greater than $\fAw W$ vouching for the same output $v$ and outputs $v$.
\end{enumerate}

By construction and the definition of {\WR}, assuming that at most a fraction $\fAw$ of the total weight is corrupted, at most a fraction $\fAn$ of virtual identities will be corrupted and, hence, assuming $\protocol$ solves VBA with nominal resilience $\fAn$, the simulated protocol will satisfy the properties of VBA.
One can easily verify that each of the five properties will be satisfied for $\protocol'$ as well.
Notice, in particular, that efficiency will still be satisfied as the total communication complexity will be increased by only a constant factor (assuming $\fAw$ and $\fAn$ to be constants).

\subparagraph*{\bf Single Secret Leader Election.} SSLE~\cite{boneh2020single} is a distributed protocol that has as an objective to select one of the participants to be a leader with an additional constraint that only the elected party knows the result of the election. Then, once the leader is ready to make a proposal, it reveals itself and other participants can then correctly verify that the claiming leader was indeed elected by the protocol. 

The original paper~\cite{boneh2020single} contains nominal solutions for the protocol relying on ThFHE~\cite{ThFHE-boneh} and on shuffling a list of commitments under the DDH assumption. The authors initially suggest that their protocols could support weights by
replicating each party to match their weights.
This approach is identical to the transformation described in this section with the exception that it does not include weight reduction and, thus, exhibits overhead proportional to the total weight (which can be prohibitively large, see \Cref{tab:evaluations}).
We can solve this issue by applying {\WRFull} at the cost of lowering the resilience by an arbitrarily small constant $\epsilon$ ($\fAw = \fAn - \epsilon$).

However, the original problem definition requires
the election to be \emph{fair}, that is, for the probability of each party being elected to be uniform.
It is easy to see that, as a result of applying weight reduction, this property will not be maintained.
Instead, we can relax it to an alternative property of \emph{chain-quality}, requiring that the fraction of blocks produced by corrupt parties should not surpass a constant fraction $\alpha$ when the adversary might control a fraction of the weights up to $\fAw$. 
Our transformation then trivially solves this problem for $\alpha := \fAn$.

Properties such as \emph{fairness} are one of the limitations of our transformations since any property that is a function of the weight of the parties may not be preserved after the transformation is applied.
We discuss fairness in slightly more detail and speculate about possible fixes to this issue in \Cref{sec:conclusion}.

\section{Applications of Weight Qualification}

\label{sec:wq-applications}

\subsection{Erasure-Coded Storage and Broadcast}

Erasure-coded storage systems~\cite{ida-rabin-89,ida-cachin-tessaro-05,ida-hendricks-07,ida-nazirkhanova-21,ida-yang-22}, also known under the names of Information Dispersal Algorithms (IDA)~\cite{ida-rabin-89} and Asynchronous Verifiable Information Dispersal (AVID)~\cite{ida-cachin-tessaro-05}, are crucial to many systems for space and communication-efficient, secure, and fault-tolerant storage.
Moreover, as demonstrated in~\cite{ida-cachin-tessaro-05}, they can yield highly communication-efficient solutions to the very important problem of asynchronous Byzantine Reliable Broadcast~\cite{textbook,BraTou85}, a fundamental building block in distributed computing that, among other things, serves as the basis for many practical consensus~\cite{honey-badger,beat-bft,all-you-need-is-dag,narwhal,bullshark}, distributed key generation~\cite{adkg,practical-adkg}, and mempool~\cite{narwhal} protocols.

The challenge of applying these protocols in the weighted setting is that $(k,m)$ erasure coding, by definition, converts the original data into $m$ discrete \emph{fragments} such that any $k$ of them are sufficient to reconstruct the original information.
Thus, each party will inevitably get to store an integer number of these fragments, and the smaller $m$ is, the more efficient the encoding and reconstruction will be.
Moreover, for the most commonly used codes--Reed Solomon--the original message must be of size at least $k \log m$ bits. Hence, using a large $m$ may lead to increased communication as the message may have to be padded to reach this minimum size.
As we illustrate in this section, determining the smallest ``safe'' number of fragments to give to each party is exactly the {\WQFull} problem defined in \Cref{sec:problem-statement}.

Let us consider the example of~\cite{ida-cachin-tessaro-05} as it is the first erasure-coded storage protocol tolerating Byzantine faults. We believe {\WQFull} can be applied analogously to other similar works. 

This protocol operates in a model where any $t$ out of $n$ parties can be malicious or faulty, where $t < \frac{n}{3}$.
In other words, it has the nominal fault threshold of $f_n = \frac{1}{3}$.
The protocol encodes the data using $(t+1,n)$ erasure coding, and the data is considered to be reliably stored once at least $2t+1$ parties claim to have stored their respective fragments.
The idea is that, even if $t$ of them are faulty, the remaining $t+1$ parties will be able to cooperate to recover the data.

In order to make a weighted version of this protocol, instead of waiting for confirmations from $2t+1$ parties, one needs to wait for confirmations from a set of parties that together possess more than a fraction $2\fw$ of total weight, where $\fAw = \fAn = \frac{1}{3}$.
A subset of weight less than $\fAw$ of these parties may be faulty.
Hence, for the protocol to work, it is sufficient to guarantee that any subset of total weight more than $2\fAw-\fAw=\fAw$ gets enough fragments to reconstruct the data.
To this end, we can apply the {\WQ} problem with the threshold $\fQw = \fAw$.
We can set $\fQn$ to be an arbitrary number such that $0 < \fQn < \fQw$.
Then, we can use $(\lceil\fQn T\rceil, T)$ erasure coding, where $T$ is the total number of tickets allocated by the {\WQ} solution.
Hence, whenever a set of parties of weight more than $2\fAw$ claim to have stored their fragments, we will be able to reconstruct the data with the help of the correct participants in this set.
As for the rest of the protocol, it can be converted to the weighted model simply by applying weighted voting, as was discussed in \Cref{subsec:weighted-voting}.

As a result, we manage to obtain a weighted protocol for erasure-coded verifiable storage with the same resilience as in the nominal protocol ($\fAw = \fAn = \frac{1}{3}$).
The ``price'' we pay is using erasure coding with a smaller rate ($\fQn$ instead of $\fAw$), i.e., storing data with a slightly increased level of redundancy. However, note that $\fQn$ can be set arbitrarily close to $\fAw$, at the cost of more total tickets and, hence, more computation.

\subsubsection*{Example instantiations} 
The communication and storage complexity of these protocols depends linearly on the rate of the erasure code.
Using Reed-Solomon with Berlekamp-Massey decoding algorithm, the decoding computation complexity~\cite{rs-decoding-complexity} is $O(m^2 \cdot \frac{M}{rm}) = O(\frac{m}{r} \cdot M)$, where $M$ is the size of the message (which we do not affect), $r$ is the rate of the code (in our case, $r = \fQn$), and $m$ is the number of fragments (in our case, the number of tickets allocated by the solution to the {\WQ} problem).
For the sake of illustration, let us fix $\fQn$ to be $\frac{1}{4}$.
Then, the rate of the code used in the weighted solution will be $\frac{4}{3}$ times smaller than in the nominal solution.
For the number of fragments $m$, let us substitute the upper bound from \Cref{col:wq-bound} ($m \le \left\lceil \frac{\fQw(1 - \fQw)}{\fQw - \fQn} n\right\rceil$). 
For $\fQw = \frac{1}{3}$ and $\fQn = \frac{1}{4}$, $m \le \frac{8}{3}n$.
Hence, the overall slow-down compared to the nominal solution is $\frac{8}{3} \cdot \frac{4}{3} \approx 3.56$.

One can also consider using FFT-based decoding algorithms~\cite{rs-decoding-fft}. Since the complexity of the FFT-based decoding depends only polylogarithmically on the number of fragments $m$, one can select the rate of the code ($r = \fQn$) to be much closer to $\fQw$ and, thus, minimize communication and storage overhead.

Some protocols~\cite{rbc-erasures-high-threshold-trusted-setup} are designed for higher reconstruction thresholds, which allows them to be more communication- and storage-efficient compared to~\cite{ida-cachin-tessaro-05}. For these cases, we will need to set $\fQw := \frac{2}{3}$.
By setting $\fQn := \frac{1}{2}$ and applying the upper bound from \Cref{col:wq-bound}, we will obtain the same reduction of factor $\frac{4}{3}$ in rate and $2$ times fewer tickets: $m \le \frac{1/3 \cdot 2/3}{2/3 - 1/2} n = \frac{4}{3}$.
The computational overhead will be $\frac{4}{3} \cdot \frac{4}{3} \approx 1.78$.

\subsection{Error-Corrected Broadcast}

\sloppy
The exciting work of~\cite{dxr21} illustrated how one can avoid the need for complicated cryptographic proofs in the construction of communication-efficient broadcast protocols by employing error-correcting codes, thus enabling a better communication complexity when a trusted setup is not available.
The protocol of~\cite{dxr21} can be used for the construction of communication-efficient Asynchronous Distributed Key Generation~\cite{adkg,practical-adkg} protocols.

Similarly to erasure codes, error-correcting codes convert the data into $m$ discrete fragments, such that any $k$ of them are sufficient to reconstruct the original information.
However, they have the additional property that the data can be reconstructed even when some of the fragments input to the decoding procedure are invalid or corrupted.
Reed-Solomon decoding allows correcting up to $e$ errors when given $k+2e$ fragments as input.

The protocol of~\cite{dxr21} tolerates up to $t$ failures in a system of $n \ge 3t+1$ parties (for simplicity, we will consider the case $n = 3t+1$).
Its key contribution is the idea of \emph{online error correction}. Put simply, the protocol first ensures that:
\begin{itemize}
    \item Every honest party obtains a cryptographic hash of the data to be reconstructed;
    \item Every honest party obtains its chunk of the data.
\end{itemize}
Then, in order to reconstruct a message, an honest party solicits fragments from all other parties and repeatedly tries to reconstruct the original data using the Reed-Solomon decoding and verifies the hash of the output of the decoder against the expected value.
As the protocol uses $k = t+1$ and $m = n$, after hearing from all $2t+1$ honest and $e \le t$ malicious parties, it will be possible to reconstruct the original data (as $2t+1+e \ge k + 2e$, for $k=t+1$).

To convert this protocol into the weighted model, it is sufficient to make sure that all honest parties together possess enough fragments to correct all errors introduced by the corrupted parties.
To this end, we will apply the {\WQ} problem.
We will set $\fQw$ to the fraction of weight owned by honest parties, i.e., $\fQw := 1 - \fAw = \frac{2}{3}$ (where $\fAw$ will be the resilience of the resulting weighted protocol, $\fAw = \fAn = \frac{1}{3}$).
However, it is not immediately obvious how to set $\fQn$ to allow the above-mentioned property.

If we want to use error-correcting codes with rate $r$, we need to guarantee that the fraction of fragments received by the honest parties (which is at least $\fQn$) is at least $r+e$, where $e$ is the fraction of fragments received by the corrupted parties.
However, since honest parties get at least the fraction $\fQn$ of all fragments, then $e \le 1 - \fQn$.
Hence, we need to set $\fQn$ so that $\fQn \ge r + (1 - \fQn)$.
We can simply set $\fQn := \frac{r}{2} + \frac{1}{2}$ for arbitrary $r < \frac{1}{3}$.

\subsubsection*{Example instantiation}
For the sake of an example, we can set $\fQw := \frac{2}{3}$, $r := \frac{1}{4}$ and $\fQn := \frac{5}{8}$.
Then, using the bound from \Cref{col:wq-bound}, the number of tickets will be at most $\frac{2/3 \cdot 1/3}{2/3 - 5/8} \cdot n \le \frac{16}{3}n$.

As was discussed above, for erasure codes, we can either use the Berlekamp-Massey decoding algorithm or the FFT-based approaches.
The same applies to error-correcting codes.
As most practical implementations use the former, we will focus on it.
In this case, the communication overhead will be $\frac{r_n}{r_w}$, where $r_n = \frac{1}{3}$ is the rate used in the nominal protocol and $r_w$ is the rate used for the weighted protocol (in the example above, $r = \frac{1}{4}$).
The computation overhead is $\frac{r_n}{r_w} \cdot \frac{T}{n}$, where $T$ is the number of tickets allocated by the {\WQ} solution (in the example above, $T \le \frac{16}{3}n$ in the worst case).
Hence, for the example parameters, the worst-case computational overhead is $\frac{4}{3} \cdot \frac{16}{3} \approx 7.11$.

\section{Derived Applications} \label{sec:derived-applications}

In this section, we discuss indirect applications of weight reduction problems that are obtained by using one or multiple building blocks discussed in \Cref{sec:wr-applications,sec:wq-applications}.
For all applications discussed here, we manage to avoid losing resilience despite applying weight reduction.
In all cases, the majority of the protocol logic should be converted to the weighted model by applying weighted voting, as discussed in \Cref{subsec:weighted-voting}. 

\subsection{Asynchronous State Machine Replication}

For asynchronous state machine replication protocols~\cite{honey-badger,beat-bft,all-you-need-is-dag,narwhal,bullshark}, we simply need to use a weighted communication-efficient broadcast protocol (discussed in \Cref{sec:wq-applications}) and weighted distributed random number generation (discussed in \Cref{sec:random-beacon}).
distributed number generation part can use a nominal protocol with threshold $\fRn = \frac{1}{2}$ and set $\fRw := \frac{1}{3}$, which is the resilience of the rest of the protocol.
Thus, in some sense, we level the resilience of different parts of the protocol, without affecting the resilience of the composition.

\subsection{Validated Asynchronous Byzantine Agreement}

The same approach can be applied to generate randomness for Validated Asynchronous Byzantine Agreement (VABA)~\cite{vaba-cachin,vaba-abraham}.

These protocols also require tight threshold signatures.
However, in practice, multi-signatures~\cite{ohta1999multi,ThresholdBLS} are usually applied instead as they have almost no overhead over threshold signatures on the system sizes where such protocols could be applied (below 1000 participants): it suffices to append the multi-signature with an array of $n$ bits, indicating the set of parties that produced the signature.
Then, along with the verification of the validity of the multi-signature itself, anyone can verify that the signers together hold sufficient weight.

Alternatively, one could apply the approach described in \Cref{subsec:tight-secret-sharing} to implement tight weighted threshold signatures. However, it would lead to an increase in message complexity of the resulting protocol, which we want to avoid.

Finally, an ad-hoc weighted threshold signature scheme can be applied, such as the one recently proposed in~\cite{das2023threshold}.
Note that these signatures cannot be used for distributed randomness generation as they lack the necessary uniqueness property, and thus we still need to apply {\Swiper} to obtain a complete protocol.

\subsection{Consensus with Checkpoints}

We can apply the same approach for checkpointing proof-of-stake consensus protocols~\cite{pikachu}, but this time for blunt threshold signatures (as discussed in \Cref{subsec:blunt}) instead of random number generation. If, for some reason, one wants to use a tight threshold signature, the approach described in \Cref{subsec:tight-secret-sharing} can be applied at the cost of just 1 additional message delay per checkpoint.

Compared to ad-hoc solutions for weighted threshold signatures~\cite{das2023threshold}, we claim that our approach is more computationally efficient as it is basically as fast as the underlying nominal protocol. 
For example, 2 pairings to verify a BLS signature~\cite{ThresholdBLS} compared to 13 pairings to verify a signature in~\cite{das2023threshold}.
Moreover, the weight reduction approach is more general and can support other types of threshold signatures, such as RSA~\cite{rsa-threshold-signatures} and Schnorr~\cite{ThresholdSchnorr}, the latter being particularly important in the context of checkpointing to Bitcoin~\cite{pikachu}.

\section{Analyzing {\WRFull} on sample systems} \label{sec:empirical-study}

\myparagraph{Data sets}
We analyzed our protocol using four real-world data sets for weight distribution: Aptos~\cite{aptos_white, aptos_stake}, Tezos~\cite{tezos_white, tezos_stake}, Filecoin~\cite{filecoin_white, filecoin_stake}, and Algorand~\cite{algorand_white, algorand_stake}.
For the reader's convenience, we provide the results for all the datasets in a separate \cref{sec:remainingplots} and present the results for only one blockchain (Tezos) in \Cref{fig:exptezos} as an example.

\begin{figure*}
    \centering
    \includegraphics[scale=0.6]{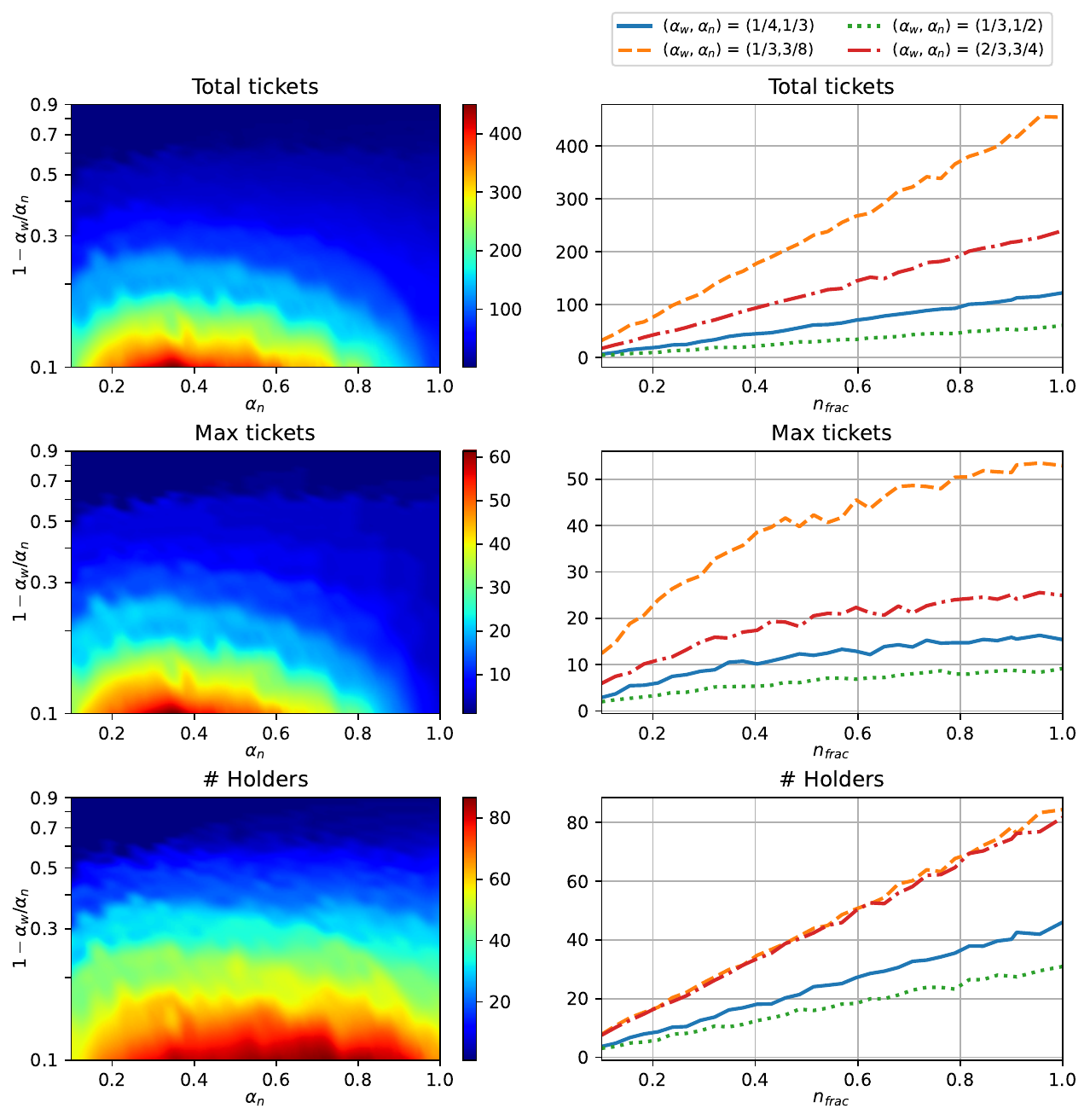}
        \caption{Experiment results using Tezos}
    \label{fig:exptezos}
\end{figure*}

\myparagraph{Experiment description} 
We performed two kinds of experiments on real blockchain data. In the first experiment, shown in the left column of \Cref{fig:exptezos}, we analyzed the influence of the choice of parameters $\fRw$ and $\fRn$ for the original data retrieved from the blockchains; the value of $\fRn$ was varied in the range $[0.1, 1]$, while the value of $\fRw$ was tested in the range $[0.1 \times \fRn, 0.9 \times \fRn]$. In the experiments showcased in the right column of \Cref{fig:exptezos}, we kept these parameters fixed and analyzed the influence of the number of parties in the metrics we tracked. In order to simulate having the same blockchain with different numbers of parties, we used the statistical technique known as bootstrapping. To this end, we performed $100$ experiments sampling parties with replacement from the blockchain data and taking the average of the results.

In each experiment, we tracked the total number of tickets distributed, the maximum number of tickets held by a single party, and the number of parties that get at least one ticket (in the figures, we label them as the number of holders). In \Cref{fig:exptezos}, we show the results for the Tezos blockchain. The results for Algorand, Aptos, and Filecoin are available in \Cref{sec:remainingplots}. The analysis of the results reveals the following information: the upper bound given in \Cref{sec:problem-statement} is very pessimistic for weight distributions emerging in practice, with the total number of tickets rarely surpassing the number of parties for different values of $\fRn$ and $\fRw$. The total number of tickets varies extremely close to a linear function on the number of parties, as well as the number of holders. The maximum number of tickets, on the other hand, seems to saturate when the number of parties in absolute terms surpasses the order of magnitude of $1000$, remaining almost constant after that point.

\section{Related Work} \label{sec:related}

\myparagraph{Knapsack}
The Knapsack problem and its variations hold huge importance in theoretical computer science and have numerous applications in both theory and practice.
The weight reduction problems studied in this paper seem to be related to, or can even be seen as a variation of the famous Knapsack problem.
For example, one can see {\WRFull} as the problem of constructing ``worst possible'' profits for a Knapsack instance given the weights and the capacity.
We refer to~\cite{KnapsackProblems04} for a comprehensive survey on the topic.

\myparagraph{Virtual users} 
The simplest solution for creating a weighted threshold cryptographic system is to simply have a user of weight $w$ become $w$ virtual users and to give one key to each of them.
Shamir's paper describing his secret sharing scheme~\cite{Sha79} puts forward this solution.
However, in practice, the total weight tends to be prohibitively large, and ``quantizing'' it requires solving weight reduction problems, which is the main subject of this paper.

\myparagraph{Weighted voting}
In~\cite{weighted-voting}, Gifford presents the idea of weighted voting for distributed storage systems.
The paper suggests assigning weights to replicas according to the estimated failure probabilities and using weight-based quorums to store and retrieve data.
We discuss the merits and limitations of this approach in \Cref{subsec:weighted-voting}.
The goal of our paper is to complement the weighted voting approach and design a framework for implementing weighted distributed protocols that can benefit from solutions and primitives that are initially designed for the nominal model.
In \Cref{sec:wr-applications,sec:wq-applications,sec:derived-applications}, we discuss in detail how to combine weighted voting and weight reduction to obtain extremely efficient weighted protocols without sacrificing resilience.

\myparagraph{Ad-hoc solutions}
There is a large body of work studying ad-hoc weighted cryptographic protocols~\cite{GarJaiMukSinWanZha22,BeiWei06,ChaKia21,ItoSaiNis89,das2023threshold,ss-from-wiretap-channels}.
Compared to these works, the weight reduction approach studied in this paper has a number of benefits, such as simplicity, efficiency, wider applicability, and a wider range of possible cryptographic assumptions.
Moreover, in many cases, ad-hoc solutions can be combined with and benefit from weight reduction.
In this paper, we also study other, non-cryptographic, applications, such as erasure and error-corrected distributed storage and broadcast protocols.

\myparagraph{Similar work by Benhamouda, Halevi, and Stambler}
A recent work~\cite{ss-from-wiretap-channels} mentioned a similar idea of reducing real weights to integers to construct \emph{ramp} secret sharing.
This project has been started and the first version of {\Swiper} has been drafted before the online publication and without any knowledge of~\cite{ss-from-wiretap-channels}.
As the main focus of~\cite{ss-from-wiretap-channels} is different, we believe that we do a much more in-depth exploration of this direction by studying different kinds of weight reduction problems and their applications beyond secret sharing, as well as providing much tighter bounds and implementing a solver that is not only linear in the worst case but also allocates very few tickets in empirical evaluations on real-world weight distributions.

\myparagraph{Application in Aptos blockchain}
A version of the {\WSFull} problem has been recently used in the Aptos blockchain in their implementation of on-chain randomness~\cite{das2024distributed}.
They consider an inverse problem where the number of tickets is fixed and the gap between $\alpha$ and $\beta$ is minimized.
Note that one can trivially reduce one problem to the other (in both directions) by using a binary search.

\section{Concluding remarks and future work directions} \label{sec:conclusion}

In this paper, we have presented a family of optimization problems called weight reduction that, to the best of our knowledge, has not been studied before. We provided practical protocols to find good, albeit not optimal, solutions to these problems.
As we have shown, it allows us to obtain efficient implementations of many weighted distributed protocols.

We believe that weight reduction problems will play an important role in the future of blockchain systems as they become more sophisticated and the need for threshold cryptography as well as erasure coding and protocols like single secret leader election grows.
At the time of writing, at least one major layer-1 blockchain has already integrated a version of {\WSFull} for generating on-chain randomness.

In this paper, we attempted the first systematic study of this family of problems, but there are still many important questions being left for future research.

\myparagraph{Fairness}
Weight reduction naturally leads to slight deviations in the relative weights of the participants.
While in this paper we focused on \emph{safety} and \emph{liveness} properties and showed that they can still be preserved, we did not consider any kind of \emph{fairness} properties.
However, we believe that, somewhat counterintuitively, some form of fairness can be preserved as well.
To this end, we are considering two possible directions:
\begin{enumerate}
    \item \textbf{Expected fairness:} In addition to deterministically assigned tickets, we can allocate some small number of tickets randomly so that each party gets exactly the same fraction of tickets as its fraction of weight \emph{in expectation}. We believe that it can be done while still preserving safety and liveness \emph{deterministically}, i.e., even in the worst case, when all the ``random'' tickets are received by the adversary.

    \item \textbf{Integral fairness:} Similarly, one can imagine a \emph{deterministic} protocol that provides fairness \emph{over time}.
    In such a scheme, the ticket assignment will be updated periodically and each party will get exactly the right number of tickets \emph{on average}, over a large enough period.
\end{enumerate}

\myparagraph{Incentives}
One important aspect of proof-of-stake blockchains is the distribution of incentives, which should depend on the weight of each party. It is not immediately clear what is the right way to allocate incentives in a system where weight reduction is being applied.

\myparagraph{Other applications}
While we covered a wide range of applications in this paper, we believe that there must be others, including ones not related to distributed computing or applied cryptography.

\myparagraph{Adversarial attacks}
In this paper, we study the ``worst case'' weight distributions by providing the upper bounds and the ``organic case'' by studying the real-world weight distributions. However, in practice, under an adversarial attack, the weight distribution will be a hybrid one: the weights of honest parties will be organic, but the weights of the adversarial parties may be redistributed maliciously. It is an interesting avenue for future work to study how much an adversary can affect the number of tickets (and, thus, the performance of the system) by redistributing their weight in a malicious manner.

\myparagraph{Complexity and more precise bounds}
Finally, there are still many theoretical questions about these problems.
Do they have polynomial-time exact solutions?
What are the lower bounds?
Can we derive better upper bounds?
Moreover, what are some other interesting and useful weight reduction problems, apart from the three defined in this paper?

\balance
\section*{Acknowledgement}
    We are grateful to Benny Pinkas for recommending the inclusion of a constant in the construction of {\Swiper}, a suggestion that helped us significantly reduce the number of allocated tickets both in theory and in practice,
    and to the anonymous PODC reviewers for constructive feedback on the paper structure and presentation.
    Andrei Tonkikh is supported by TrustShare Innovation Chair (financed by Mazars). Luciano Freitas is supported by Nomadic Labs.

\clearpage
\bibliographystyle{plainurl}
\bibliography{references}

\clearpage
\appendix
\nobalance

\section{Proofs} \label{sec:proof}

In this section, we provide formal proofs for \Cref{thm:wr-bound,col:wq-bound,thm:ws-bound}.

\subsection{Upper bounds on {\WRFull} and {\WSFull}} \label{subsec:wr-bound}

Let us start with some auxiliary definitions.
A \emph{ticket assignment} $t$ is a vector of $n$ numbers: $t_1, \dots, t_n \in \mathbb{Z}_{\ge 0}$.
With a slight abuse of notation, for a ticket assignment $t$ and a set $S \subseteq [n]$, we use notation $t(S)$ to denote $\sum_{i \in S} t_i$.
Let us say that a ticket assignment $t$ is \emph{viable} if $t([n]) \neq 0$ and $\forall S \subseteq [n]:$ if $w(S) < \fRw W$, then $t(S) < \fRn t([n])$, that is if it satisfies the requirements of the {\WRFull} problem as defined in \Cref{sec:problem-statement}.

In this section, we formally prove \Cref{thm:wr-bound} by constructing a viable ticket assignment $\hat{t}$ such that $\hat{t}([n]) \le
\left\lceil \frac{\fRw (1 - \fRw)}{\fRn - \fRw} n\right\rceil$.
As the starting point, we consider a family of ticket assignments parameterized by a single number $s > 0$:
\begin{center}
    $(t_{s})_i := \left\lfloor w_i s + \fRw \right\rfloor$.
\end{center}

Let $s^*$ be a \emph{locally minimal} viable value for $s$, i.e., a positive number such that $t_{s^*}$ is viable, but $t_{s^*-\varepsilon}$ is not, for any sufficiently small $\varepsilon$.
Since we already proved that viable values of $s$ exist, it is easy to see that such $s^*$ exists.
Moreover, there must be some $j$ such that $s^* w_j + \fRw$ is an integer.
Indeed, if this does not hold, we would be able to slightly decrease $s^*$ without changing the ticket assignment, which would contradict the assumption that $s^*$ is a local minimum.
Let $t^* := t_{s^*}$ and $J := \{j \in [n] \mid s^* w_i + \fRw\text{ is an integer}\}$.
Let $t'$ be a ticket assignment in which we take one ticket from each party in $J$, i.e.: 

\begin{align*}
t'_{i} := 
    \begin{cases}
        t^*_{i} - 1 & \text{ if $i \in J$} \\
        t^*_{i} & \text{ otherwise}
    \end{cases}
\end{align*}

Notice that $t'$ is equal to $t_{s^* - \varepsilon}$ for a sufficiently small $\varepsilon > 0$.\footnote{Indeed, if we decrease $s^*$ by any positive amount, each party in $J$ will lose at least one ticket as they will step over the rounding threshold.
However, it is also easy to see that $\varepsilon$ can be made small enough so that no other party will lose a ticket and no party in $J$ will lose more than one ticket.}
Hence, by construction, $t'$ is not viable.
Now, let us consider a set of ``intermediate'' ticket assignments:
we will be taking tickets from parties in $J$ as long as the ticket assignment stays viable.
We will end up with two ticket assignments: $\hat{t}$ and $\doublehat{t}$ such that $\hat{t}$ is viable and $\doublehat{t}$ is not, and $\hat{t}([n]) = \doublehat{t}([n]) + 1$.
All that is left is to prove that $\hat{t}([n]) \le \left\lceil \frac{\fRw (1 - \fRw)}{\fRn - \fRw} n \right\rceil$ or, equivalently, that $\doublehat{t}([n]) \le \left\lceil \frac{\fRw (1 - \fRw)}{\fRn - \fRw} n \right\rceil - 1$.

Since $\doublehat{t}$ is not viable, either $\doublehat{t}([n]) = 0$ or there must exist a set $S \subseteq [n]$ such that $w(S) < \fRw W$ and $\doublehat{t}(S) \ge \fRn \doublehat{t}([n])$.
As the former case is trivial, we will focus on the latter.
Let us provide an upper bound on $\doublehat{t}(S)$ and a lower bound on $\doublehat{t}(\overline{S})$, where $\overline{S} := [n] \setminus S$.
To this end, let us note that, for any $i \in [n]$, it holds that $\doublehat{t}_i \ge w_i s^* + \fRw - 1$.
Indeed, there are two cases to consider:
\begin{enumerate}
    \item if $\doublehat{t}_i = t^*_i$, the inequality holds trivially as $\doublehat{t}_i = t^*_i = \lfloor w_i s^* + \fRw \rfloor$;
    \item otherwise, $\doublehat{t}_i = t^*_i - 1$. However, by construction, it means that $w_i s^* + \fRw$ is an integer and, thus $t^*_i = w_i s^* + \fRw$ and $\doublehat{t}_i = w_i s^* + \fRw - 1$.
\end{enumerate}
Hence:
\begin{align*}
    \doublehat{t}(S) 
        &=   \sum_{i \in S} \doublehat{t}_i \\
        &\le \sum_{i \in S} t^*_i =   \sum_{i \in S} \lfloor w_i s^* + \fRw \rfloor \\
        &<   \fRw W s^* + \fRw |S| \\
    \doublehat{t}(\overline{S})
        &=   \sum_{i \not\in S} \doublehat{t}_i \\
        &\ge \sum_{i \not\in S} (w_i s^* + \fRw - 1) \\
        &>   (1 - \fRw) W s^* - (1 - \fRw) (n - |S|)
\end{align*}

By construction, $\doublehat{t}(S) \ge \fRn \doublehat{t}([n])$ and
$\doublehat{t}([n]) = \doublehat{t}(S) + \doublehat{t}(\overline{S})$.
Hence, $(1 - \fRn) \doublehat{t}(S) \ge \fRn \doublehat{t}(\overline{S})$.
From this, we can derive an upper bound on $s^*$:
\begin{align*}
    & (1 - \fRn) \doublehat{t}(S) \ge \fRn \doublehat{t}(\overline{S}) \Rightarrow \\
    \Rightarrow\; & (1 - \fRn)(\fRw W s^* + \fRw |S|) \\
    & \qquad > \fRn((1 - \fRw) W s^* - (1 - \fRw) (n - |S|)) \\
    \Rightarrow\; & s^* < \frac{\fRn (1 - \fRw) n}{(\fRn - \fRw)W} - \frac{|S|}{W}
\end{align*}

Finally, we can combine everything into an upper bound on $\doublehat{t}([n])$:
\begin{align*}
    \doublehat{t}([n]) 
        &\le \frac{\doublehat{t}(S)}{\fRn} \\
        &<   \frac{\fRw}{\fRn} (W s^* + |S|) \\
        &<   \frac{\fRw}{\fRn} \left(\frac{\fRn (1 - \fRw) n}{\fRn - \fRw} - |S| + |S|\right) \\
        &=   \frac{\fRw (1 - \fRw)}{\fRn - \fRw} n
\end{align*}

Since $\doublehat{t}([n])$ is an integer and the inequality is strict, we can rewrite it as $\doublehat{t}([n]) \le \left\lceil \frac{\fRw (1 - \fRw)}{\fRn - \fRw} n \right\rceil - 1$.
As, by construction, $\hat{t}$ is viable and $\hat{t}([n]) = \doublehat{t}([n]) + 1$,
we found a viable ticket assignment with at most $\left\lceil \frac{\fRw (1 - \fRw)}{\fRn - \fRw} n\right\rceil$ tickets, thus concluding the proof of \Cref{thm:wr-bound,col:wq-bound}. \QED

\subsection{Upper bound on {\WSFull}}

Let $\gamma := \frac{\alpha + \beta}{2}$.
For {\WSFull}, we analyze a family of ticket assignments of form 
$t_{s,i} := \lfloor w_i s + \gamma \rfloor$.
Let us consider the case when the {\WSShort} conditions are violated, i.e., there exist sets $S_1$ and $S_2$ such that $w(S_1) < \alpha W$, $w(S_2) > \beta W$, and $t(S_1) \ge t(S_2)$.
This means that at least one of two events happened: $t(S_1) \ge \gamma T$ or $t(S_2) < \gamma T$, or, equivalently, $t(\overline{S_2}) > (1 - \gamma) T$. 
Let us first consider the case when $t(S_1) \ge \gamma T$.
This can only happen when $s < \frac{\gamma(1 - \gamma) n}{(\gamma - \alpha) W}$. The proof is done using the same set of techniques as in \Cref{subsec:wr-bound}:
\begin{align*}
    t(S_1) 
        &=   \sum_{i \in S_1} t_i 
        =   \sum_{i \in S_1} \lfloor w_i s + \gamma \rfloor \\
        &\le \sum_{i \in S_1} (w_i s + \gamma) \\
        &<   \alpha W s + \gamma |S_1| \\
    t(\overline{S_1})
        &=   \sum_{i \not\in S_1} \lfloor w_i s + \gamma \rfloor \\
        &\ge \sum_{i \not\in S_1} (w_i s + \gamma - 1) \\
        & >   \beta W s - (1 - \gamma)(n - |S_1|)
\end{align*}
\begin{align*}
    &t(S_1) \ge \gamma T \\
    \Leftrightarrow\;& (1 - \gamma)t(S_1) \ge \gamma t(\overline{S_1}) \\
    \Rightarrow\;& (1 - \gamma) (\alpha W s + \gamma |S_1|)
            > \gamma (\beta W s - (1 - \gamma)(n - |S_1|)) \\
    \Rightarrow\;& s < \frac{\gamma (1 - \gamma) n}{(\gamma - \alpha)W}
\end{align*}

Analogously, in the case when $t(\overline{S_2}) > (1 - \gamma) T$, we can prove\atadd{ (by substitution of $(1-\gamma)$ in place of $\gamma$ and $(1-\beta)$ in place of $\alpha$)} that:
\begin{gather*}
    s < \frac{(1 - \gamma) (1 - (1 - \gamma)) n}{((1 - \gamma) - (1 - \beta))W} = \frac{\gamma (1 - \gamma) n}{(\beta - \gamma)W}
\end{gather*}
We specifically chose $\gamma = \frac{\alpha + \beta}{2}$ so that the two bounds coincide: $s < \frac{2 \gamma (1 - \gamma) n }{(\beta - \alpha) W}$.
Hence, it is sufficient to select $s := \frac{\gamma (2 - \alpha - \beta) n }{(\beta - \alpha) W}$ to guarantee that neither of the two events happens and $t(S_1) < \gamma T \le t(S_2)$.

Let us now compute a bound on the total number of tickets:
\begin{align*}
    T \le s W + \gamma n = \frac{(\alpha + \beta) (1 - \alpha)}{\beta - \alpha} n
\end{align*}

\QED

\section{Exact solution using MIP} \label{sec:mip}

The way we formulate $\WR$ in \cref{def:wrp} can be directly translated into an instance of bi-level optimization problem~\cite{ColMarSav07}. In such problems, we define an \emph{upper level} optimization problem which contains another (lower-level) optimization problem in its constraints, namely: 

\begin{align*}
\text{minimize} & \sum_{i=1}^nt_i\\
\text{subject to} & \sum_{i=1}^nx_it_i < \fRn \sum_{i=1}^nt_i\\
& \text{maximize} \sum_{i=1}^nx_it_i \\
& \text{subject to} \sum_{i=1}^nw_ix_i < \fRw\sum_{i=1}^nw_i \\
& \sum_{i=1}^nt_i \ge 1\\
& x_i \in \{0,1\}, t_i \in \{0,1,2,\dots\}
\end{align*}

Noticing that the inner optimization problem is the Knapsack problem, we can hard-code a dynamic programming by profits solution to the Knapsack problem into the constraints.
Unfortunately, the resulting MIP has a lot, albeit a polynomial number, of constraints and, thus, is prohibitively slow for inputs of size larger than a couple of dozens.

\section{Experiment Results}
\label{sec:remainingplots}

\Crefrange{fig:exp-tezos-full}{fig:exp-algorand-full} demonstrate the results of the experiments on the data from the stake distribution of 4 major blockchain systems: Aptos~\cite{aptos_white, aptos_stake}, Tezos~\cite{tezos_white, tezos_stake}, Filecoin~\cite{filecoin_white, filecoin_stake}, and Algorand~\cite{algorand_white, algorand_stake}.
The analysis of the experimental results is presented in \Cref{sec:empirical-study}.

\begin{figure*}
    \centering
    \includegraphics[width=\linewidth]{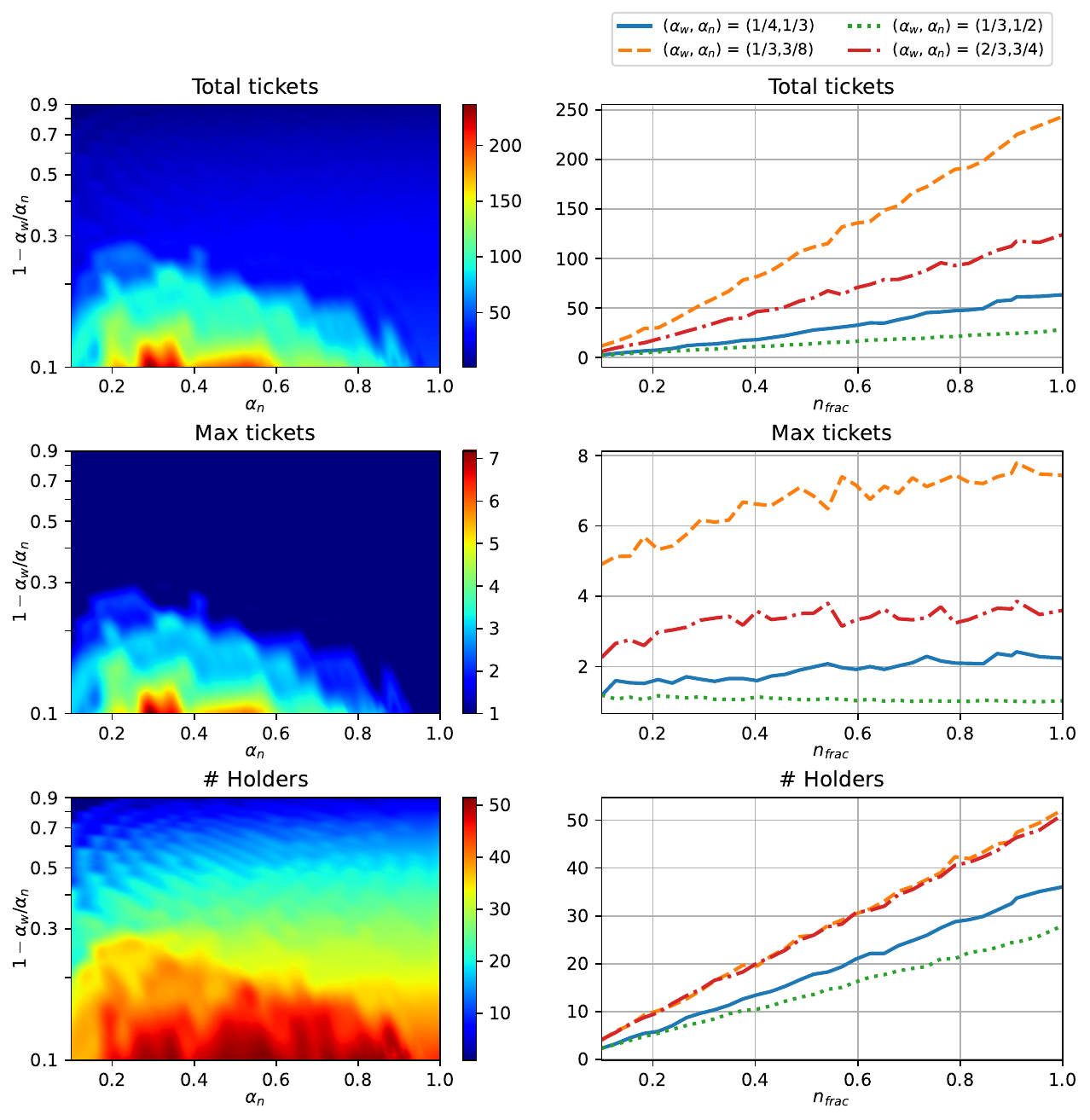}
        \caption{Experiment results using Aptos}
    \label{fig:exp-aptos-full}
\end{figure*}

\begin{figure*}
    \centering
    \includegraphics[width=\linewidth]{plots/tezos.pdf}
        \caption{Experiment results using Tezos}
    \label{fig:exp-tezos-full}
\end{figure*}

\begin{figure*}
    \centering
    \includegraphics[width=\linewidth]{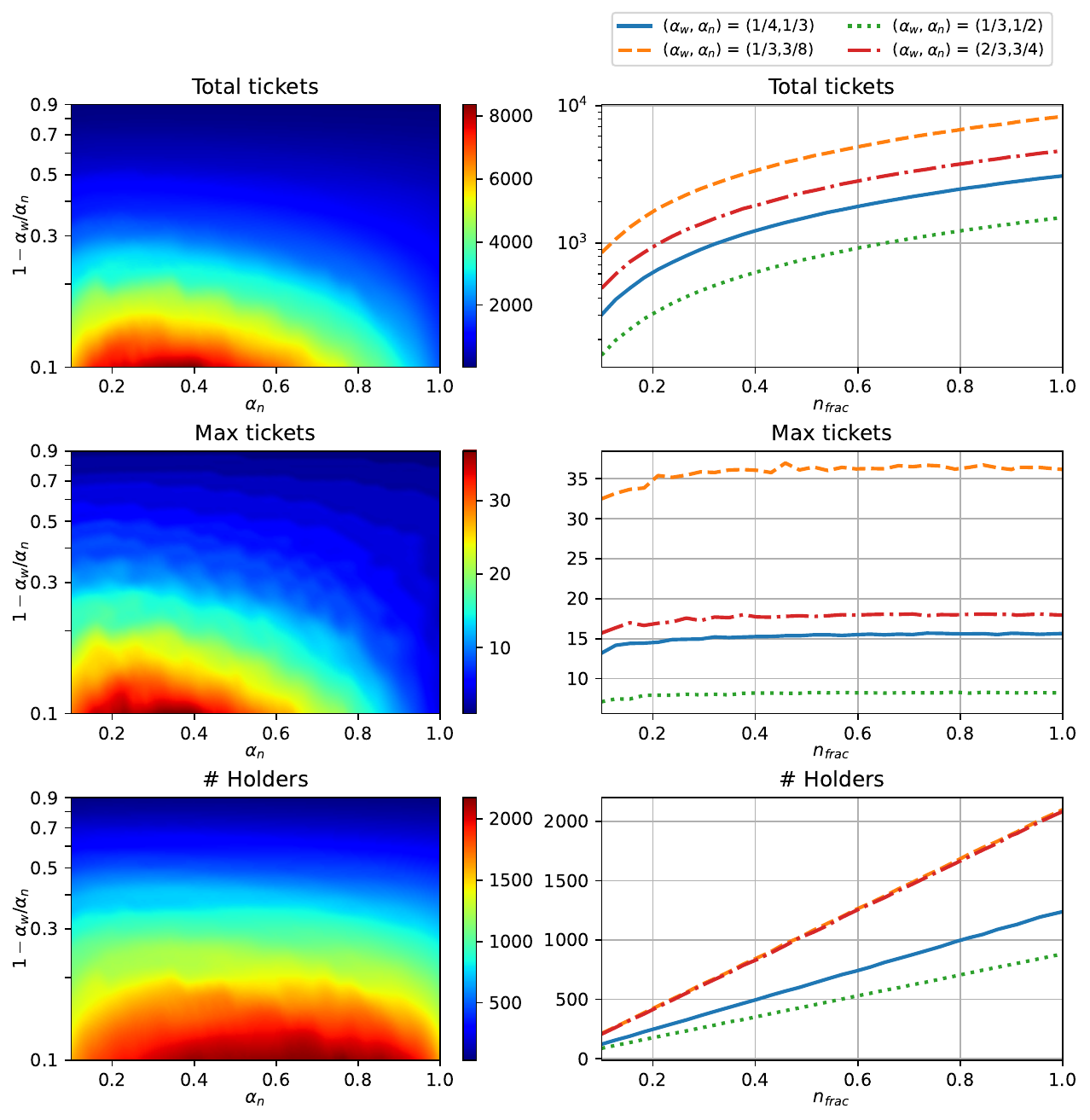}
        \caption{Experiment results using Filecoin}
    \label{fig:exp-filecoin-full}
\end{figure*}

\begin{figure*}
    \centering
    \includegraphics[width=\linewidth]{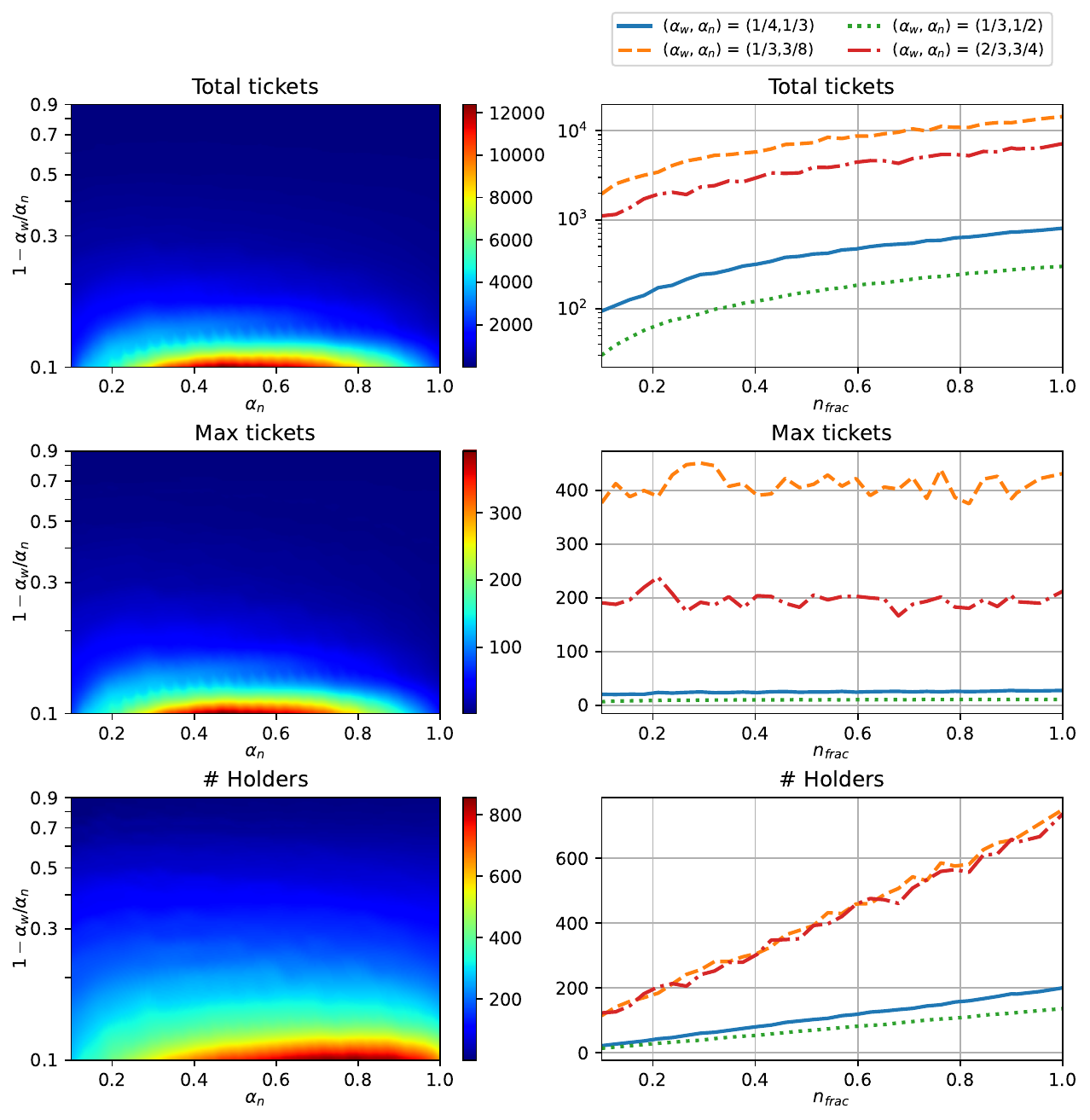}
        \caption{Experiment results using Algorand}
    \label{fig:exp-algorand-full}
\end{figure*}

\end{document}